\documentclass[aps,prl,10pt,twocolumn,superscriptaddress,floatfix,nofootinbib,longbibliography]{revtex4-2}
\usepackage{amsmath,amssymb,amsthm,bm}
\usepackage{graphicx}
\usepackage{txfonts}
\usepackage{xcolor}
\usepackage[colorlinks=true,linkcolor=blue,citecolor=blue,urlcolor=blue]{hyperref}

\newtheorem{theorem}{Theorem}
\newtheorem{proposition}{Proposition}
\newtheorem{corollary}{Corollary}
\newcounter{smsec}
\renewcommand{\thesmsec}{S\arabic{smsec}}
\newcommand{\smsection}[2]{\refstepcounter{smsec}\section*{\thesmsec. #1}\label{#2}}
\newtheorem{smlemma}{Lemma}

\newtheorem{smprop}{Proposition}

\begin{document}

\title{Many-body entanglement quantification in arbitrary states through a  few partial-transpose moments}

\author{Sudip Chakrabarty}
\email{sudip27042000@gmail.com}
\affiliation{S. N. Bose National Centre for Basic Sciences, Block JD, Sector III, Salt Lake, Kolkata 700 106, India}

\author{A. S. Majumdar}
\email{archan@bose.res.in}
\affiliation{S. N. Bose National Centre for Basic Sciences, Block JD, Sector III, Salt Lake, Kolkata 700 106, India}

\begin{abstract}
Can the entanglement of an arbitrary many-body state be quantified without reconstructing the state? We answer the question affirmatively by showing how the measure of logarithmic negativity can be estimated using a finite number of even partial-transpose (PT) moments of a many-body state. 
Considering the squared PT spectrum as a probability measure, we solve
exactly the task of a truncated Hausdorff moment problem with Gauss and Gauss-Radau quadrature. We obtain a hierarchy of lower bounds  on the logarithmic negativity, which improves with increasing order of moments.
Convergence to the exact value of the measure can be achieved for every state of finite dimension.  We illustrate the bounds for ground states of three different interacting spin models, as well as for a thermal state. An error analysis demonstrates the efficacy of our scheme towards
handling the limitation of finite statistics encountered in actual experiments with randomized measurements. Moreover, an example of density matrix renormalization group based numerical estimation is provided for handling system sizes beyond the scope of exact diagonalization.
\end{abstract}

\maketitle

\textit{Introduction.---}Entanglement is both a resource for quantum information processing and a window into collective quantum phenomena~\cite{HorodeckiRMP2009, AmicoFazioOsterlohVedral2008}. It enables teleportation of an unknown state \cite{Bennett1993Teleporting}, underlies quantum cryptography \cite{Ekert1991}, and improves precision beyond the shot-noise limit \cite{GiovannettiLloydMaccone2004,GiovannettiLloydMaccone2011}. In many-body systems it reveals quantum criticality through universal scaling \cite{VidalLatorreRicoKitaev2003,CalabreseCardy2004} and exposes topological order that no local observable can see \cite{KitaevPreskill2006,LevinWen2006}. The states dealt with in experiments and simulations are mixed, because of noise, finite temperature, or degrees of freedom that cannot be accessed. For such states, detecting entanglement is only the first step. What one needs to know is how much of it survives \cite{HorodeckiRMP2009,AmicoFazioOsterlohVedral2008,VidalTarrach1999,Vidal2000,TerhalHorodecki2000}.

The logarithmic negativity is one of the few mixed-state entanglement monotones that can be computed directly from a density matrix \cite{Peres1996,Horodecki1996,Zyczkowski1998,VidalWerner2002,Plenio2005}. This advantage comes at a price. Its exact value requires the full spectrum of the partial transpose (PT), and for an unknown state this requires full tomography~\cite{d2003quantum, odonnell2016efficient, christandl2012reliable, haah2017sample}, whose cost grows rapidly with system size \cite{NielsenChuang2011}. Entanglement is among the most informative quantities of a many-body state, while the data needed to compute it are among the least accessible. One would like to quantify entanglement from partial information, without reconstructing the state.

Moments of the partial transpose (PT) offer such information. They are multicopy observables that can be estimated with swap-type circuits and randomized measurements \cite{Carteret2005,GrayBanchiBayatBose2018,Elben2019Statistical,Huang2020Predicting,Zhou2020,Elben2020}, the same tools that enable experimental access to R\'enyi entropies \cite{Islam2015,Brydges2019,Rath2023}. Moments serve adequately as witnesses of entanglement and of other quantum resources \cite{HorodeckiEkert2002,GuhneToth2009,Aggarwal2024,PhysRevA.109.022247,Mallick2025,Mukherjee2025,PhysRevA.111.032406,lkcb-nqqb,x56c-2yp1,chakrabarty2026operationalcertificationnonclassicalityarbitrary,chakrabarty2026detectionquantumimaginarityusing,chakrabarty2026quantifyingmargenauhillnonclassicality}. For PT moments the detection problem is now well understood. The third moment already gives a strong test \cite{Elben2020}, a full set of moment conditions is known \cite{Neven2021,Miller_2026}, and the optimal entanglement test for a given set of PT moments has been constructed \cite{YuImaiGuhne2021}. The problem of quantification is much less studied. Quantitative witnesses bound entanglement measures from expectation values \cite{EisertBrandaoAudenaert2007}, and it was shown recently that the two moments of order two and four yield a lower bound on the logarithmic negativity \cite{TarabungaHaug2025}. This leaves a basic yet important question open. {\textit{Given a finite set of PT moments, how much logarithmic negativity can one certify, and which bound is the strongest that these numbers can support?}}

Here we answer this question for the even PT moments. We take the squared PT spectrum as a classical probability measure. The even moments then define a truncated Hausdorff moment problem, and the negativity becomes the integral of a fixed function against the unknown measure \cite{ShohatTamarkin1943,AkhiezerBook,KarlinStudden1966,KreinNudelman1977,CurtoFialkow1991}. We solve the resulting optimization exactly with Gauss and Gauss-Radau quadrature \cite{GautschiBook}. The outcome is a hierarchy of lower bounds on the logarithmic negativity. Each member is the best bound that its moment data can give within the moment problem. The hierarchy improves strictly until the distinct magnitudes of the PT spectrum are resolved, becomes exact at that point, and converges to the logarithmic negativity. Its first member is the bound of Ref.~\cite{TarabungaHaug2025}, which is therefore optimal for two moments. 

Through our formalism, we verify the above hierarchy on various categories
of quantum states. We first establish two-qubit benchmarks considering a
non-maximally entangled pure state and also the mixed Werner state. More
interesting examples of many-body entangled states are then furnished 
through several spin-chain models, with inclusion of the effect of thermal noise, as well. Large many-body states are not amenable to methods using exact diagonalization, and for such cases we provide an example of numerical estimation using the density matrix renormalization group (DMRG). Moreover, the suitability of our moments-based method for quantifying many-body
entanglement is further revealed through an error analysis required for handling the
effect of finite-size statistics in actual experiments involving randomized measurements.   

\textit{Setup.---}Let $\rho$ be a state on $\mathcal H_A\otimes\mathcal H_B$, let $\rho^\Gamma$ be its partial transpose on $B$, and let $\{\lambda_i\}$ be the real eigenvalues of $\rho^\Gamma$. The logarithmic negativity and the PT moments are
\begin{align}
\mathcal E(\rho)&=\ln\|\rho^\Gamma\|_1=\ln\sum_i|\lambda_i|, \label{eq:E_def}\\
p_k(\rho)&=\mathrm{tr}\big[(\rho^\Gamma)^k\big]=\sum_i\lambda_i^k, \label{eq:pk_def}
\end{align}
with $k$ a positive integer. Normalization gives $p_1=1$, and $p_2$ is the purity of $\rho$. We use $\mathcal E$ for the logarithmic negativity throughout. The even moments depend only on the magnitudes $|\lambda_i|$, which is also all that $\mathcal E$ depends on. We therefore introduce the normalized squared eigenvalues and their power sums,
\begin{align}
q_i=\frac{\lambda_i^2}{p_2}, \qquad \mu_k=\sum_i q_i^k=\frac{p_{2k}}{p_2^k}. \label{eq:qi}
\end{align}
The $q_i$ are non-negative and sum to one, so they form a probability distribution, and $\mu_1=1$. We call $\mu_k$ the reduced moments. Let $\nu$ be the measure on $[0,1]$ that places weight $q_i$ at the point $q_i$, for every nonzero $q_i$. With the function $h(x)=x^{-1/2}$ one finds
\begin{align}
\mu_k=\int_0^1 x^{k-1}\,d\nu, \qquad S:=\sum_i q_i^{1/2}=\int_0^1 h\,d\nu, \label{eq:nu_def}
\end{align}
and the logarithmic negativity takes the form
\begin{align}
\mathcal E(\rho)=\tfrac12\ln p_2+\ln S. \label{eq:E_via_S}
\end{align}
The first term is fixed by one measured moment. All the difficulty sits in $S$, the integral of a convex function that diverges at the origin against a measure of which only a few moments are known.

Suppose that $p_2,p_4,\dots,p_K$ are given for an even $K\ge4$. This fixes the $r=K/2$ reduced moments $\mu_1,\dots,\mu_r$. Many probability measures on $[0,1]$ share these moments, and $\nu$ is one of them. The smallest value of the integral of $h$ over all of them,
\begin{align}
S_K^-:=\inf_\sigma\Big\{\int_0^1 h\,d\sigma \;:\; \int_0^1 x^{k-1}d\sigma=\mu_k,\ 1\le k\le r\Big\}, \label{eq:SK_def}
\end{align}
where $\sigma$ runs over probability measures on $[0,1]$, cannot exceed $S$. This defines the $K$-moment quantifier
\begin{align}
\mathcal E_K(\rho):=\tfrac12\ln p_2+\ln S_K^-\ \le\ \mathcal E(\rho). \label{eq:EK_def}
\end{align}
Equation~\eqref{eq:EK_def} is a valid bound by construction. It does not yet tell us the value of $S_K^-$, nor which measure attains it.

\textit{Optimal moment hierarchy.---}The infimum in Eq.~\eqref{eq:SK_def} can be found exactly. Every derivative of $h$ has a fixed sign on $(0,1]$, positive at even order and negative at odd order. This allows one to build, by Hermite interpolation, a polynomial that lies below $h$ and touches it at the nodes of a quadrature rule. The rule is then both an admissible measure and a certificate that no admissible measure does better. The following theorem states the result, with $s$ the number of distinct nonzero values among the $|\lambda_i|$.

\begin{theorem}[Optimal $K$-moment quantifier]
\label{thm:main}
Let $K\ge4$ be even, $r=K/2$ and $\ell=\lfloor K/4\rfloor$. The infimum $S_K^-$ in Eq.~\eqref{eq:SK_def} is attained by a unique measure $\nu_K^\star$. If $s\le\ell$, then $\nu_K^\star=\nu$. If $s>\ell$ and $r$ is even, $\nu_K^\star$ is the $\ell$-point Gauss quadrature rule of the moments $\mu_1,\dots,\mu_r$, with all nodes in $(0,1)$. If $s>\ell$ and $r$ is odd, $\nu_K^\star$ is the Gauss-Radau rule with $\ell$ nodes in $(0,1)$ and one node fixed at $x=1$. In every case, with nodes $x_j$ and weights $w_j$,
\begin{align}
\mathcal E_K(\rho)=\tfrac12\ln p_2+\ln\sum_j w_j\,x_j^{-1/2}. \label{eq:EK_gauss}
\end{align}
\end{theorem}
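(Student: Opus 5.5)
The plan is to recast the constraints in Eq.~\eqref{eq:SK_def} as fixing $\int x^j\,d\sigma$ for $0\le j\le r-1$, so $\sigma$ must reproduce $\nu$ on all polynomials of degree $\le r-1$. If $r$ is even, $r-1=2\ell-1$, which is exactly the degree of exactness of an $\ell$-point Gauss rule. If $r$ is odd, $r-1=2\ell$, which is the degree of exactness of a Gauss--Radau rule with $\ell$ free nodes and one node prescribed at $x=1$. The proof then has three parts: (i) the case $s\le\ell$, (ii) existence and location of the quadrature rule when $s>\ell$, and (iii) a Hermite-interpolation certificate giving optimality and uniqueness.

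\emph{Case $s\le\ell$.} Let $\pi(x)=\prod_{i}(x-q_i)^2$, the product running over the $s$ distinct nonzero values $q_i$. Its degree is $2s\le 2\ell\le r-1$, so $\int\pi\,d\sigma=\int\pi\,d\nu=0$ for every admissible $\sigma$. Since $\pi\ge0$ vanishes only at the $q_i$, $\sigma$ is supported on $\{q_i\}$. The weights then solve a Vandermonde system, using the moments of degree $\le s-1\le r-1$, whose solution is unique. Hence the feasible set is $\{\nu\}$, and $\nu_K^\star=\nu$.

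\emph{Case $s>\ell$: existence of the rule.} Because $\nu$ has more than $\ell$ support points, the Hankel moment matrix of order $\ell$ is positive definite. Therefore the orthogonal polynomial of degree $\ell$ for $\nu$ exists, and its zeros (the Gauss nodes) are simple. They lie strictly inside the convex hull $[\min q_i,\max q_i]\subset(0,1]$, so they lie in $(0,1)$, and the weights are positive. For $r$ odd, I apply the same argument to the measure $(1-x)\,d\nu$. This measure is nonnegative and has at least $s-1\ge\ell$ support points in $(0,1)$; when $s-1=\ell$ I will need to check separately that Hankel positivity still holds at order $\ell$, since a point at $q=1$ would only occur if $s=1$. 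Its $\ell$ Gauss nodes become the free Radau nodes, and the weight at $1$ is fixed by the remaining moment. I expect this existence and positivity bookkeeping, especially in the Radau case and at the boundary count $s=\ell+1$, to be the main technical obstacle.

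\emph{Certificate and uniqueness.} Let $p$ be the Hermite interpolant of $h=x^{-1/2}$ matching value and derivative at each free node $x_j$, and, in the Radau case, matching only the value at $1$. It has degree $2\ell-1$ (Gauss) or $2\ell$ (Radau), hence degree $\le r-1$. The interpolation error formula gives, for $x\in(0,1]$,
\begin{equation*}
h(x)-p(x)=\frac{h^{(n)}(\xi)}{n!}\,(x-1)^{\epsilon}\prod_{j=1}^{\ell}(x-x_j)^2,
\end{equation*}
with $n=2\ell$ and $\epsilon=0$ for Gauss, and $n=2\ell+1$ and $\epsilon=1$ for Radau. In the Gauss case $h^{(2\ell)}>0$. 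In the Radau case $h^{(2\ell+1)}<0$ while $x-1\le0$. In both cases $h\ge p$ on $(0,1]$, with equality only at the nodes. At $x=0$, $h=+\infty$, so any $\sigma$ charging $0$ already has $\int h\,d\sigma=\infty$. For every admissible $\sigma$ this yields
\begin{equation*}
\int h\,d\sigma\ \ge\ \int p\,d\sigma=\int p\,d\nu_K^\star=\int h\,d\nu_K^\star=\sum_j w_jx_j^{-1/2}.
\end{equation*}
This proves that $\nu_K^\star$ attains the infimum, which combined with Eq.~\eqref{eq:EK_def} gives Eq.~\eqref{eq:EK_gauss}. For uniqueness, equality forces $\sigma$ to be supported on the nodes, since $h-p>0$ off them. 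The weights are then fixed by a Vandermonde system in at most $\ell+1\le r$ unknowns, so $\sigma=\nu_K^\star$.
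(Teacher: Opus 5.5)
Your strategy matches the paper's: a polynomial minorant of degree at most $r-1$ from Hermite interpolation at Gauss or Gauss--Radau nodes, the alternating signs of $h^{(n)}$, support on the nodes plus a Vandermonde system for uniqueness, and the measure $(1-x)\,d\nu$ for odd $r$. There is, however, a false step in your case $s\le\ell$. You write $2s\le 2\ell\le r-1$, but $2\ell\le r-1$ holds only for odd $r$. For even $r=2\ell$ one has $r-1=2\ell-1$, so at $s=\ell$ the polynomial $\prod_i(x-q_i)^2$ has degree $r$, and its integral is not fixed by the constraints.

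The feasible set is then genuinely larger than $\{\nu\}$. For $K=4$ and a Bell state, $\nu=\delta_{1/4}$, and every probability measure on $[0,1]$ with mean $1/4$, e.g.\ $\tfrac12\delta_{1/8}+\tfrac12\delta_{3/8}$, is admissible. This is not a negligible corner: $K=4s$ is exactly the saturation order of Theorem~\ref{thm:saturation}, and it includes the stabilizer case of Corollary~\ref{cor:stabilizer}.

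The repair is what the paper's few-magnitudes lemma does, separating $2s\le r-1$ from $2s=r$. When $r=2\ell$ and $s=\ell$, the $\ell\times\ell$ Hankel matrix is nonsingular because $\nu$ has $\ell$ points. So the moments up to degree $2\ell-1$ determine the monic orthogonal polynomial of degree $\ell$, and $\prod_i(x-q_i)$ is that polynomial because it vanishes on the support of $\nu$. Hence $\nu$ is itself the $\ell$-point Gauss rule of $\mu_1,\dots,\mu_r$, with nodes $q_i\in(0,1]$. Your Hermite certificate works verbatim for nodes in $(0,1]$, so it gives optimality and uniqueness of $\nu$ without claiming that $\mathcal M_K$ is a singleton.

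A smaller item remains open in the Radau case. Admissibility needs the weight $w_c$ at $x=1$ to be nonnegative, and saying it is fixed by the remaining moment does not show this. Let $v_j$ be the Gauss weights of $(1-x)\,d\nu$ and $\tilde\pi_\ell$ its monic orthogonal polynomial of degree $\ell$. Put $w_j=v_j/(1-x_j)>0$ and $w_c=1-\sum_jw_j$. Exactness up to degree $2\ell$ follows by writing $f(x)=f(1)+(x-1)g(x)$ with $\deg g\le 2\ell-1$. Applying the rule to $\tilde\pi_\ell^2$ then gives $w_c\,\tilde\pi_\ell(1)^2=\int\tilde\pi_\ell^2\,d\nu>0$, as in the paper.

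Your worry about $s-1=\ell$ is unnecessary. $s>\ell\ge1$ forces $s\ge2$, so $\nu$ has no atom at $1$, and $(1-x)\,d\nu$ carries all $s\ge\ell+1$ points of $\nu$, inside $(0,1)$.
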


The idea of the proof is given in Appendix~A of the End Matter, and the full proof in  \hyperref[sm:proof]{Sec.~S2 of the Supplemental Material (SM)}~\cite{SM}. Since $\nu_K^\star$ reproduces every given moment, no function of $\mu_1,\dots,\mu_r$ that is a valid bound for all measures can exceed $S_K^-$. In this sense $\mathcal E_K$ is the strongest bound that the even moments support.

The theorem has a simple reading. Among all spectra compatible with the data, the least entangled one has the fewest levels, namely $\ell$ magnitudes placed so that the measured moments are reproduced. For $K=4$ this is Jensen's inequality for the convex function $h$, which replaces the spectrum by a single level at its mean. The higher orders refine Jensen's inequality with the help of further moments.

The scope of this statement should be made clear. The moment problem keeps the even moments only. It uses neither the odd moments nor the normalization $p_1=1$. The normalization alone gives $\sum_i|\lambda_i|\ge1$, that is $\mathcal E\ge0$, so the value actually certified is
\begin{align}
\mathcal E_K^+(\rho)=\max\{\mathcal E_K(\rho),\,0\}. \label{eq:EK_plus}
\end{align}
The role of the odd moments is discussed in  \hyperref[sm:proof]{SM~S2}~\cite{SM}.

For $K=4$ the rule is a single node at $x=\mu_2$ with unit weight, and Eq.~\eqref{eq:EK_gauss} becomes
\begin{align}
\mathcal E_4(\rho)=\frac12\ln\frac{p_2^3}{p_4}. \label{eq:E4}
\end{align}
This is the bound of Ref.~\cite{TarabungaHaug2025}. Theorem~\ref{thm:main} adds that nothing better can be obtained from $p_2$ and $p_4$ within the moment problem. For $K=6$ the rule has one free node $a$ and the fixed node $x=1$, which gives
\begin{align}
a&=\frac{\mu_2-\mu_3}{1-\mu_2}, \qquad w_a=\frac{(1-\mu_2)^2}{1-2\mu_2+\mu_3}, \label{eq:E6_node}\\
\mathcal E_6(\rho)&=\tfrac12\ln p_2+\ln\!\big[w_a\,a^{-1/2}+1-w_a\big]. \label{eq:E6}
\end{align}
For $K=8$ the rule has two free nodes $x_1>x_2$, the roots of $x^2-\alpha x+\beta$ with
\begin{align}
\alpha=\frac{\mu_4-\mu_2\mu_3}{\mu_3-\mu_2^2}, \qquad \beta=\alpha\mu_2-\mu_3, \label{eq:E8_nodes}
\end{align}
and the weights are $w_1=(\mu_2-x_2)/(x_1-x_2)$ and $w_2=1-w_1$. The quantifier $\mathcal E_8$ then follows from Eq.~\eqref{eq:EK_gauss}. The derivation of these forms, and the proof that all weights are positive, are given in  \hyperref[sm:small_K]{SM~S3}~\cite{SM}. For larger $K$ the nodes and weights follow from one small eigenvalue problem (\hyperref[sm:higherK]{SM~S7}~\cite{SM}).

Each new moment adds one linear constraint to Eq.~\eqref{eq:SK_def}. The admissible set can only shrink, so
\begin{align}
\mathcal E_4(\rho)\le\mathcal E_6(\rho)\le\mathcal E_8(\rho)\le\cdots\le\mathcal E(\rho). \label{eq:monotone}
\end{align}
One may ask whether a further moment always helps. The next result shows that it does, as long as something is left to gain.

\begin{proposition}[Strict improvement]
\label{prop:strict}
$\mathcal E_K(\rho)<\mathcal E_{K+2}(\rho)$ if and only if $\mathcal E_K(\rho)<\mathcal E(\rho)$.
\end{proposition}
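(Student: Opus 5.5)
The plan is to reduce the statement to a comparison of optimal measures, using the monotonicity in Eq.~\eqref{eq:monotone} and the uniqueness and explicit structure of $\nu_K^\star$ from Theorem~\ref{thm:main}. Since $\mathcal E_K=\tfrac12\ln p_2+\ln S_K^-$ and $p_2$ is common to all orders, every comparison can be made at the level of $S_K^-$.

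\emph{The easy direction.} Suppose $\mathcal E_K(\rho)=\mathcal E(\rho)$. Then Eq.~\eqref{eq:monotone} gives $\mathcal E_K\le\mathcal E_{K+2}\le\mathcal E=\mathcal E_K$. Hence $\mathcal E_K=\mathcal E_{K+2}$, and there is no strict improvement. By contraposition, strict improvement implies $\mathcal E_K<\mathcal E$.

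\emph{The converse.} Assume $S_K^-<S$ and, for contradiction, $S_{K+2}^-=S_K^-$. The measure $\nu_{K+2}^\star$ reproduces $\mu_1,\dots,\mu_{r+1}$, so it is admissible for the order-$K$ problem in Eq.~\eqref{eq:SK_def}. It attains the value $S_{K+2}^-=S_K^-$. By the uniqueness in Theorem~\ref{thm:main} we conclude $\nu_{K+2}^\star=\nu_K^\star$. So a single measure must be the optimizer at both orders, and we rule this out by cases.

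First, if $s\le\ell$, Theorem~\ref{thm:main} gives $\nu_K^\star=\nu$ and hence $S_K^-=S$, which contradicts the assumption. So $s>\ell$.

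Second, if $s\le\ell'=\lfloor (K+2)/4\rfloor$, then $\nu_{K+2}^\star=\nu$. This again forces $S_K^-=S$, a contradiction.

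It remains to treat $s>\ell'$, where both optimizers are quadrature rules, and to compare their forms.

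If $r$ is even, $\nu_K^\star$ is the $\ell$-point Gauss rule, with all nodes in $(0,1)$ and no mass at $x=1$. Since $r+1$ is odd, $\nu_{K+2}^\star$ is the Gauss--Radau rule with $\ell'=r/2=\ell$ interior nodes and a node at $x=1$.

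If $r$ is odd, $\nu_K^\star$ is a Gauss--Radau rule with a node at $x=1$. Since $r+1$ is even, $\nu_{K+2}^\star$ is the $\ell'=(r+1)/2$-point Gauss rule, with all nodes in $(0,1)$.

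In either case one of the two rules charges the point $x=1$ and the other does not, provided the Radau weight at $x=1$ is strictly positive. Then the two measures cannot coincide, which is the desired contradiction.

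\emph{Main obstacle.} The step needing the most care is exactly this positivity of the Gauss--Radau weight at the endpoint; a vanishing weight there would make the two rules coincide. I would obtain it from the standard theory: a Radau rule with nodes in $(0,1]$ has all weights positive, because the weights are integrals of squared Lagrange-type polynomials against the measure. Alternatively, it follows from the explicit constructions in SM~S3 and SM~S7 referenced after Theorem~\ref{thm:main}. A zero weight at $x=1$ would also contradict the node counts fixed by the theorem. One would then have a rule with only $\ell$ points matching $r+1$ moments while $s>\ell'\ge\ell$, contradicting the Hankel-rank (degree of exactness) argument that underlies the node count in Theorem~\ref{thm:main}. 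With this settled, $S_{K+2}^->S_K^-$, and taking logarithms gives $\mathcal E_K<\mathcal E_{K+2}$.
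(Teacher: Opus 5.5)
Your proof is correct. It shares the paper's skeleton: monotonicity gives the easy direction, and for the converse, $\nu_{K+2}^\star\in\mathcal M_K$ together with uniqueness of the minimizer over $\mathcal M_K$ reduces everything to showing $\nu_{K+2}^\star\neq\nu_K^\star$.

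The two arguments differ in how they show this. The paper never uses the form of $\nu_{K+2}^\star$. It proves $\nu_K^\star\notin\mathcal M_{K+2}$ directly, through the defect in the next moment: $\int\pi_\ell^2\,d\nu>0$ for even $r$ and $\int(1-x)\tilde\pi_\ell^2\,d\nu>0$ for odd $r$ (Eqs.~\eqref{S:next_even} and \eqref{S:next_odd}). In other words, the degree of exactness of the order-$K$ rule is sharp. That route needs no case split on $s$ versus $\lfloor (K+2)/4\rfloor$, and it quantifies how far the old optimizer is from being admissible.

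You instead apply the classification of Theorem~\ref{thm:main} at both orders and use the alternation between Gauss and Gauss--Radau rules, so your only extra input is $w_c>0$. Your squared-polynomial argument for this is exactly Eq.~\eqref{S:wc}, $w_c\,\tilde\pi_\ell(1)^2=\int\tilde\pi_\ell^2\,d\nu>0$. Your degree-of-exactness alternative is also valid, whereas the appeal to SM~S3 would by itself cover only $K=6$. This is the same mechanism the paper uses for Theorem~\ref{thm:saturation}, so your proof fits naturally with that one. The cost is a three-way case analysis and heavier reliance on the structure theorem.

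A small refinement: $w_c>0$ is really needed only for even $r$. There the Radau rule $\nu_{K+2}^\star$ has the same number $\ell$ of interior nodes as the Gauss rule $\nu_K^\star$. For odd $r$, a vanishing endpoint weight would leave $\nu_K^\star$ with only $\ell$ support points. That already differs from the $(\ell+1)$-point Gauss rule $\nu_{K+2}^\star$, whose weights are all positive.
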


The proof is given in  \hyperref[sm:saturation]{SM~S4}~\cite{SM}. The hierarchy thus rises strictly until it reaches $\mathcal E$ and is constant afterwards. The order at which this happens can be read off from the spectrum.

\begin{theorem}[Saturation]
\label{thm:saturation}
$\mathcal E_K(\rho)=\mathcal E(\rho)$ if and only if the PT spectrum has at most $\lfloor K/4\rfloor$ distinct nonzero magnitudes $|\lambda_i|$.
\end{theorem}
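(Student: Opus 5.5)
Both directions reduce to Theorem~\ref{thm:main} and the uniqueness of the minimizer $\nu_K^\star$. Since $\mathcal E_K=\tfrac12\ln p_2+\ln S_K^-$ and $\mathcal E=\tfrac12\ln p_2+\ln S$, the claim is equivalent to $S_K^-=S$ if and only if $s\le\ell$, where $s$ is the number of distinct nonzero $|\lambda_i|$. Equivalently, $s$ is the number of support points of $\nu$. The map $\lambda_i\mapsto q_i=\lambda_i^2/p_2$ identifies distinct nonzero magnitudes with distinct atoms of $\nu$, so the support of $\nu$ has exactly $s$ points.

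\emph{Sufficiency.} If $s\le\ell$, Theorem~\ref{thm:main} gives $\nu_K^\star=\nu$ directly. Hence $S_K^-=\int h\,d\nu=S$ and $\mathcal E_K=\mathcal E$. It is worth recording why this is consistent. A measure with $s\le\ell$ atoms is determined by its first $2s\le 2\ell\le r$ power moments. In this case the Hankel moment matrices are singular, the moment data sit on the boundary of the moment space, and the admissible set in Eq.~\eqref{eq:SK_def} is the singleton $\{\nu\}$.

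\emph{Necessity.} Suppose $S_K^-=S$. Then $\nu$ is admissible and attains the infimum, so by uniqueness in Theorem~\ref{thm:main} we have $\nu=\nu_K^\star$. Assume for contradiction that $s>\ell$. Then $\nu_K^\star$ is the $\ell$-point Gauss rule with nodes in $(0,1)$ when $r$ is even, or the Gauss--Radau rule with $\ell$ interior nodes plus a node at $x=1$ when $r$ is odd. The Gauss rule has exactly $\ell$ atoms, so $\nu=\nu_K^\star$ would force $s=\ell$, a contradiction.

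In the Radau case $\nu_K^\star$ has $\ell+1$ atoms, and $s=\ell+1$ is not excluded by counting alone. This is the main obstacle. The extra atom sits at $x=1$, so $\nu$ would need $q_i=1$ for some $i$. Since the $q_i$ sum to one, $\nu$ would then be the single atom $\delta_1$, giving $s=1\le\ell$ (because $K\ge4$ forces $\ell\ge1$), which contradicts $s>\ell$. The same observation covers any degenerate Radau rule whose node at $1$ carries weight zero, because that rule then has only $\ell$ atoms and the counting argument applies.

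It remains to confirm that in the regime $s>\ell$ the Gauss rule has all $\ell$ weights strictly positive and distinct nodes, so that its atom count is exactly $\ell$ and not fewer. I would justify this from the positive definiteness of the order-$\ell$ Hankel matrix of $\nu$, which holds because $\nu$ has more than $\ell$ support points. The same positive definiteness gives distinct real nodes and positive weights for the Gauss rule, and under $s>\ell$ this is the rule Theorem~\ref{thm:main} selects. This finishes both directions.
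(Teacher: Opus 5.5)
Your argument is correct and is essentially the paper's: sufficiency is read off from Theorem~\ref{thm:main}, and necessity combines uniqueness of $\nu_K^\star$ with two facts, namely that the Gauss rule has at most $\ell$ atoms and that $\nu$ cannot carry the Radau atom at $x=1$ when $s>\ell$ (your counting treatment of a possibly zero weight there replaces the paper's appeal to $w_c>0$ in Eq.~\eqref{S:wc}, and your closing check that the Gauss weights are positive is unnecessary, since ``at most $\ell$ atoms'' already contradicts $s>\ell$). Your ``consistency'' aside under sufficiency is wrong and should be dropped: for $r=2\ell$ and $s=\ell$ the data $\mu_1,\dots,\mu_r$ reach only degree $2s-1$, so in general $\mathcal M_K\neq\{\nu\}$ and the data are not on the boundary of the moment space (for $K=4$ and an entangled stabilizer state, $\mathcal M_4$ contains every probability measure on $[0,1]$ with mean $\mu_2\in(0,1)$); this is why Lemma~\ref{lem:few} separates $2s\le r-1$ from $2s=r$, and in the latter case only the minimizer is unique, not the admissible set.
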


The proof follows from the uniqueness of $\nu_K^\star$ and is given in  \hyperref[sm:saturation]{SM~S4}~\cite{SM}. A spectrum with $s$ magnitudes is therefore resolved exactly at the order
\begin{align}
K_{\rm sat}=4s. \label{eq:Ksat}
\end{align}
This is the central statement on when the hierarchy converges. A state of finite dimension has finitely many magnitudes, so the hierarchy always terminates, and from $K_{\rm sat}$ on it returns $\mathcal E$ itself and not a bound. The fixed node of the Gauss-Radau rule does not resolve a new magnitude, so $\mathcal E_6$ improves on $\mathcal E_4$ without saturating on any state on which $\mathcal E_4$ does not. An important family is resolved already at the lowest order.

\begin{corollary}[Stabilizer states]
\label{cor:stabilizer}
For every entangled pure stabilizer state, $\mathcal E_4(\rho)=\mathcal E_6(\rho)=\cdots=\mathcal E(\rho)$.
\end{corollary}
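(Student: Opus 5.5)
The corollary will follow from Theorem~\ref{thm:saturation} with $K=4$, where $\lfloor K/4\rfloor=1$, together with the monotonicity \eqref{eq:monotone}. It therefore suffices to show that for every entangled pure stabilizer state the PT spectrum has exactly one distinct nonzero magnitude $|\lambda_i|$. Theorem~\ref{thm:saturation} then gives $\mathcal E_4(\rho)=\mathcal E(\rho)$. Since $\mathcal E_4\le\mathcal E_6\le\cdots\le\mathcal E$, every higher $\mathcal E_K$ is squeezed to the same value.

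The key step is the spectrum computation, which I would carry out in three stages.
\begin{enumerate}
\item Write the pure state in Schmidt form, $|\psi\rangle=\sum_{j=1}^{m}\sqrt{c_j}\,|a_j\rangle|b_j\rangle$. The partial transpose of $|\psi\rangle\langle\psi|$ has eigenvalues $c_j$ for $j=1,\dots,m$, together with $\pm\sqrt{c_jc_k}$ for $j<k$, and zeros. The nonzero magnitudes are thus $\{\sqrt{c_jc_k}\}_{j,k}$. These take a single value if and only if all Schmidt coefficients are equal, $c_j=1/m$, in which case every nonzero $|\lambda_i|=1/m$.
\item Invoke the standard structure of bipartite stabilizer states (Fattal et al.; Hein et al.). Every pure stabilizer state, under any bipartition $A|B$, is equivalent by local Cliffords on $A$ and $B$ to a product of $g$ Bell pairs and local product states. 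Its reduced state $\rho_A$ is therefore proportional to a projector of rank $2^{g}$. Equivalently, $\rho_A=2^{-|A|}\sum_{P\in S_A}P$, where $S_A$ is the subgroup of stabilizers supported on $A$, so $\rho_A$ is a scaled projector with flat spectrum $2^{-g}$.
\item Conclude that the Schmidt coefficients are all equal to $2^{-g}$. Entanglement means $g\ge1$, so $m=2^g\ge2$. This makes the PT spectrum $\{2^{-g}\}$ with multiplicities $m(m+1)/2$ and $\{-2^{-g}\}$ with multiplicity $m(m-1)/2$. Hence $s=1$.
\end{enumerate}

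As a consistency check, Eq.~\eqref{eq:E4} should then return $\tfrac12\ln(p_2^3/p_4)$. Here $p_2=1$ and $p_4=m^2\cdot m^{-4}=m^{-2}$, so the formula gives $\ln m=g\ln2=\mathcal E$, as expected.

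The main obstacle is stage~2, the flatness of the Schmidt spectrum. I would prove it directly from the stabilizer-group projector formula $|\psi\rangle\langle\psi|=2^{-n}\sum_{P\in S}P$. Tracing out $B$ kills every element not of the form $P_A\otimes I_B$, which leaves $\rho_A=2^{-|A|}\sum_{P\in S_A}P$. The operator $2^{-\log_2|S_A|}\sum_{P\in S_A}P$ is a projector, because $S_A$ is an abelian group not containing $-I$. Hence $\rho_A$ is proportional to a projector, and the Schmidt coefficients are equal. Everything else is bookkeeping.
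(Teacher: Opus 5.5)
Your proof is correct and follows the paper's route. Both arguments use the Schmidt decomposition, which gives PT eigenvalues $c_j$ and $\pm\sqrt{c_jc_k}$; the flat Schmidt spectrum of a stabilizer state then leaves a single nonzero magnitude, and Theorem~\ref{thm:saturation} at $K=4$ together with monotonicity finishes the proof. Your derivation of $\rho_A\propto$ projector from $|\psi\rangle\langle\psi|=2^{-n}\sum_{P\in S}P$ makes self-contained the flatness step, which the paper only cites.
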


The reason is that such a state has a flat Schmidt spectrum \cite{HeinEisertBriegel2004}, so all nonzero $|\lambda_i|$ are equal (\hyperref[sm:saturation]{SM~S4}~\cite{SM}). A generic many-body state lies at the opposite extreme. Its number of magnitudes $s$ can be in the hundreds, so that $K_{\rm sat}$ is out of reach. The relevant question is then how the bound behaves before saturation. By Proposition~\ref{prop:strict} it increases at every step, and the next theorem shows that it approaches $\mathcal E$ whatever the value of $s$.

\begin{theorem}[Convergence]
\label{thm:convergence}
For every state $\rho$, the quantifiers $\mathcal E_K(\rho)$ converge to $\mathcal E(\rho)$ as $K\to\infty$.
\end{theorem}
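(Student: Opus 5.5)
Since the paper works with states of finite dimension, the shortest route is to reduce Theorem~\ref{thm:convergence} to Theorem~\ref{thm:saturation}. The PT spectrum of $\rho$ on $\mathcal H_A\otimes\mathcal H_B$ has at most $d_Ad_B$ eigenvalues, so the number $s$ of distinct nonzero magnitudes $|\lambda_i|$ is finite. Take any even $K\ge 4s$. Then $\lfloor K/4\rfloor\ge s$, and Theorem~\ref{thm:saturation} gives $\mathcal E_K(\rho)=\mathcal E(\rho)$. By Eq.~\eqref{eq:monotone} the sequence $\mathcal E_K$ is nondecreasing and bounded above by $\mathcal E$. It is therefore eventually constant and equal to $\mathcal E$, which proves convergence. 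The only point to check is that the $K$ in the hierarchy run over all even values, so that $K\ge 4s$ is always reached; this is immediate.

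Because this argument is almost vacuous, I would also give a proof that does not use $s$. That proof shows the result is not an artefact of exact saturation, and it would survive if $\nu$ were a general measure on $[0,1]$. The plan has three steps.

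\textbf{Step 1 (extract a limit measure).} For each $K$, let $\nu_K^\star$ be the optimal measure from Theorem~\ref{thm:main}. These are probability measures on the compact interval $[0,1]$. By Prokhorov/Helly, some subsequence converges weakly to a probability measure $\sigma_\infty$.

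\textbf{Step 2 (identify the limit).} For each fixed $k$, the measure $\nu_K^\star$ reproduces $\mu_k$ whenever $K/2\ge k$. Since $x^{k-1}$ is continuous, $\sigma_\infty$ has all the moments $\mu_k$. The Hausdorff moment problem on $[0,1]$ is determinate, because polynomials are dense in $C[0,1]$. Hence $\sigma_\infty=\nu$, and in fact the whole sequence converges weakly to $\nu$.

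\textbf{Step 3 (pass to the limit in the integral).} Extend $h(x)=x^{-1/2}$ by $h(0)=+\infty$. This extension is lower semicontinuous and nonnegative on $[0,1]$. The portmanteau inequality for lower semicontinuous functions bounded below then gives
\begin{equation*}
\liminf_{K\to\infty}\int h\,d\nu_K^\star\ \ge\ \int h\,d\nu=S .
\end{equation*}
Combined with $S_K^-\le S$, this yields $S_K^-\to S$. Since $p_2>0$ is fixed, $\mathcal E_K\to\mathcal E$ by Eq.~\eqref{eq:EK_def}.

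The main obstacle is Step 3. The function $h$ is unbounded at the origin, so weak convergence alone does not license exchanging limit and integral. One must instead rely on lower semicontinuity, which holds because $h$ blows up to $+\infty$ rather than oscillating. Only the $\liminf$ inequality is needed, since the upper bound $S_K^-\le S$ holds by construction. I would state the portmanteau step carefully by approximating $h$ from below by the bounded continuous truncations $\min(h,M)$ and letting $M\to\infty$ via monotone convergence.
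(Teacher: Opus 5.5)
Your proposal is correct, and your second argument is essentially the paper's own proof in SM S5. That proof uses weak compactness of probability measures on $[0,1]$ and determinacy of the Hausdorff moment problem to force $\nu_K^\star\to\nu$. It then gets lower semicontinuity of $\int h\,d\sigma$ from the bounded continuous truncations $\min(h,N)$ and closes with the bound $S_K^-\le S$.

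Your first argument is a genuinely different and much shorter route. Since $s$ is finite, every even $K\ge 4s$ has $\lfloor K/4\rfloor\ge s$, so Theorem~\ref{thm:saturation} gives $\mathcal E_K=\mathcal E$ for all such $K$ directly. You do not even need the monotonicity of Eq.~\eqref{eq:monotone}. The reduction is not circular, because the saturation theorem is proved from Theorem~\ref{thm:main} without using convergence. It also yields more, namely exact termination at the explicit order $K_{\rm sat}=4s$; the paper itself states this consequence in the main text.

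What this route costs is that it rests on the full quadrature structure and uniqueness of $\nu_K^\star$ and on finiteness of $s$. It is also silent about the regime that matters in practice, where $s$ is in the hundreds and $K_{\rm sat}$ is out of reach. The paper deliberately chose the compactness argument because it ``does not refer to the dimension of the state.'' That argument needs only admissibility of the minimizers (near-minimizers would do), determinacy, and lower semicontinuity of $h$. It therefore survives for an arbitrary probability measure $\nu$ on $[0,1]$, for instance one with infinitely many magnitudes accumulating at zero. Neither route gives a rate of convergence before saturation.
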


The proof is given in  \hyperref[sm:convergence]{SM~S5}~\cite{SM}. It uses the fact that a measure on a compact interval is fixed by all its moments, so the admissible sets shrink to $\nu$ alone. The argument does not refer to the dimension of the state. The price of a higher order is experimental. Estimating $p_K$ needs $K$ copies in a swap-type circuit, or a number of randomized measurements that grows with $K$ and exponentially with the size of the subsystem \cite{Elben2020,Zhou2020,Neven2021}. The orders $K\le8$, given here in closed form, are the ones within present reach.

\begin{figure}[t]
\centering
\includegraphics[width=\linewidth]{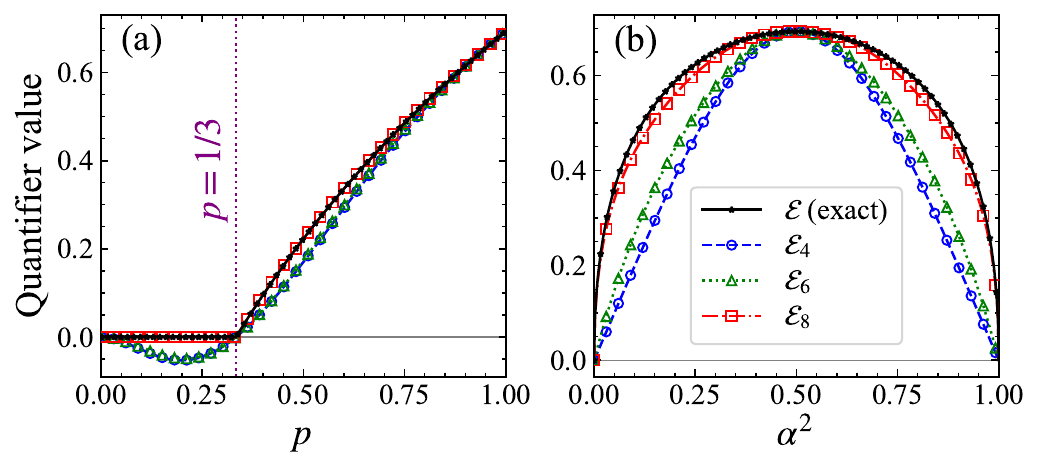}
\caption{Logarithmic negativity $\mathcal E$ and the quantifiers $\mathcal E_4$, $\mathcal E_6$, $\mathcal E_8$ for (a) the Werner state of Eq.~\eqref{eq:werner} as a function of $p$, and (b) the pure state of Eq.~\eqref{eq:pure} as a function of $\alpha^2$. In (a) the PT spectrum has two magnitudes and $\mathcal E_8$ coincides with $\mathcal E$, as predicted by Theorem~\ref{thm:saturation}. The vertical line marks the separability threshold $p=1/3$. In (b) the spectrum has three magnitudes, so $\mathcal E_8$ stays below $\mathcal E$ except at $\alpha^2=0,\,1/2,\,1$, and saturation needs $K=12$.}
\label{fig:qubit}
\end{figure}

\begin{figure*}[t]
\centering
\includegraphics[width=0.8\linewidth]{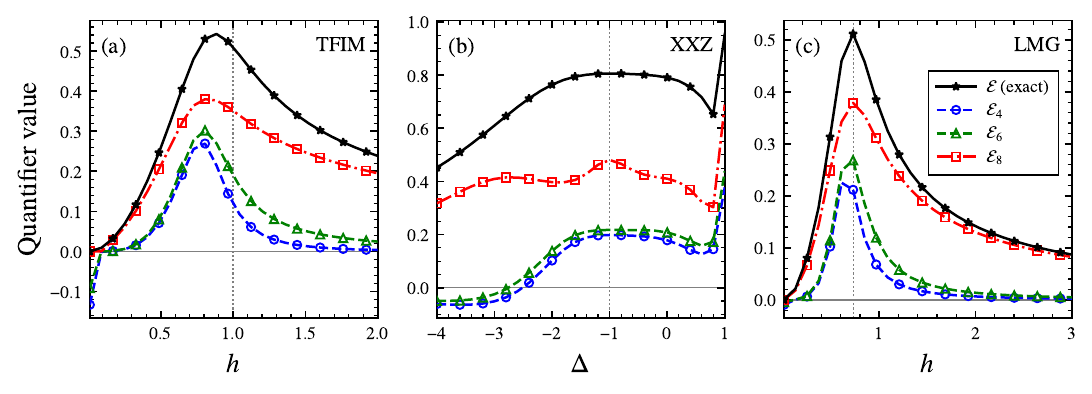}
\caption{Logarithmic negativity $\mathcal E$ and the quantifiers $\mathcal E_4$, $\mathcal E_6$, $\mathcal E_8$ for the ground state of (a) the transverse-field Ising model of Eq.~\eqref{eq:H_tfim} as a function of the field $h$, (b) the XXZ chain of Eq.~\eqref{eq:H_xxz} as a function of the anisotropy $\Delta$, and (c) the LMG model of Eq.~\eqref{eq:H_lmg} as a function of $h$. In all panels $n=12$, with two blocks of five sites in the middle kept and one site at each end traced out. The dotted lines mark the critical point $h=1$ in (a), the boundaries $\Delta=\pm1$ of the critical phase in (b), and the finite-size entanglement peak $h\approx0.73$ in (c). The ordering of Eq.~\eqref{eq:monotone} holds at every point, and $\mathcal E_8$ recovers a large part of the gap left by $\mathcal E_4$ and $\mathcal E_6$.}
\label{fig:manybody}
\end{figure*}

\textit{Two-qubit benchmarks.---}Theorem~\ref{thm:saturation} predicts the order at which the bound becomes exact. We test this on two families with a known PT spectrum. The first is the Werner state \cite{Werner1989},
\begin{align}
\rho_W(p)=p\,|\Phi^+\rangle\langle\Phi^+|+(1-p)\,\frac{\mathbb I_4}{4}, \label{eq:werner}
\end{align}
whose partial transpose has the eigenvalue $(1+p)/4$ three times and $(1-3p)/4$ once. There are two magnitudes, which merge at $p=0$ and $p=1$, and one of them vanishes at the separability threshold $p=1/3$. Saturation is therefore predicted at $K=8$. The second family is the pure state
\begin{align}
|\psi(\alpha)\rangle=\alpha|00\rangle+\sqrt{1-\alpha^2}\,|11\rangle, \label{eq:pure}
\end{align}
with PT eigenvalues $\alpha^2$, $1-\alpha^2$ and $\pm\alpha\sqrt{1-\alpha^2}$. These are three magnitudes, so saturation needs $K=12$, except at $\alpha^2=0,\,1/2,\,1$ where they merge. Fig.~\ref{fig:qubit} and Fig.~\ref{fig:Kconv_qubit} confirms both predictions. For the Werner state at $p=1/2$ we find $\mathcal E_4=0.170$, $\mathcal E_6=0.174$ and $\mathcal E_8=\mathcal E=0.223$. The step to $\mathcal E_6$ is small and $\mathcal E_8$ closes the gap, as expected, since the sixth moment adds no new level and the eighth adds one. For $p<1/3$ the state is separable and $\mathcal E_4$, $\mathcal E_6$ are negative, so that Eq.~\eqref{eq:EK_plus} returns zero.

\textit{Many-body entanglement.---}The natural setting for a moment-based quantifier is an extended system, where tomography is out of reach and the growth of entanglement signals a quantum phase transition \cite{AmicoFazioOsterlohVedral2008,SchuchWolfVerstraeteCirac2008,Cirac2021mps}. We consider the ground states of three spin-$\tfrac12$ models on $n=12$ sites, the transverse-field Ising model (TFIM), the XXZ chain and the Lipkin-Meshkov-Glick (LMG) model \cite{Lipkin1965,LatorreOrusRicoVidal2005} respectively, with the Hamiltonians,
\begin{align}
H_{\mathrm{TFIM}}&=-\sum_{k}\sigma^x_k\sigma^x_{k+1}-h\sum_k\sigma^z_k, \label{eq:H_tfim}\\
H_{\mathrm{XXZ}}&=-\sum_{k}\big(\sigma^x_k\sigma^x_{k+1}+\sigma^y_k\sigma^y_{k+1}+\Delta\,\sigma^z_k\sigma^z_{k+1}\big), \label{eq:H_xxz}\\
H_{\mathrm{LMG}}&=-\frac{1}{n-1}\sum_{i<j}\sigma^x_i\sigma^x_j-h\sum_k\sigma^z_k. \label{eq:H_lmg}
\end{align}
They cover an isolated critical point at $h=1$ (TFIM), a gapless phase for $-1\le\Delta<1$ (XXZ), and an infinite-range interaction without any notion of distance (LMG), whose entanglement peaks at $h\approx0.73$ for this size. Two adjacent blocks $A$ and $B$ of five sites each in the middle of the chain are kept, and the two end sites $C$ are traced out. The state of $AB$ is then mixed, and its partial transpose is diagonalized exactly. Details are given in  \hyperref[sm:manybody]{SM~S6}~\cite{SM}.

\begin{figure}[t]
\centering
\includegraphics[width=0.9\linewidth]{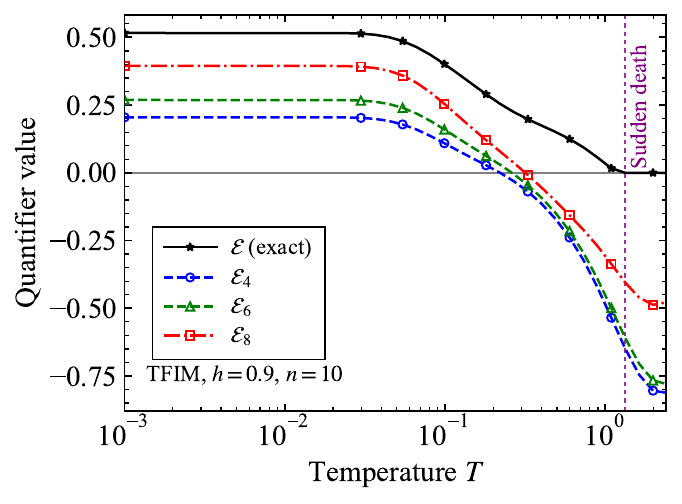}
\caption{Logarithmic negativity $\mathcal E$ and the quantifiers $\mathcal E_4$, $\mathcal E_6$, $\mathcal E_8$ for the Gibbs state of the transverse-field Ising model at $h=0.9$ on $n=10$ sites as a function of the temperature $T$. Two blocks of four sites are kept and the two end sites are traced out. The negativity vanishes at $T\approx1.21$ (dashed line). Each $\mathcal E_K$ crosses zero at a lower temperature and is negative beyond it, so that the certified value $\mathcal E_K^+$ of Eq.~\eqref{eq:EK_plus} is zero there.}
\label{fig:thermal}
\end{figure}

Figure~\ref{fig:manybody} shows the result. The ordering of Eq.~\eqref{eq:monotone} holds at every scanned point of the three models. In the TFIM all quantifiers vanish for $h\to0$, where the ground state is a cat state and the reduced state of $AB$ is separable. They peak just below $h=1$ and decay as the chain polarizes. Around the peak the gap to $\mathcal E$ is widest. The reason is physical. Close to the critical point the correlation length exceeds the size of the blocks, many Schmidt levels of comparable weight contribute, and the PT spectrum inherits their spread. At $h=0.85$ we find $\mathcal E=0.542$ while $\mathcal E_4=0.245$, $\mathcal E_6=0.290$ and $\mathcal E_8=0.380$. Two further moments thus recover close to half of what $\mathcal E_4$ misses, and most of this gain comes from $p_8$, as for the Werner state. This is what Theorem~\ref{thm:saturation} leads one to expect. The spectrum has several hundred distinct magnitudes, so Eq.~\eqref{eq:Ksat} places exact saturation beyond $K=10^3$. In the paramagnet the eighth moment matters even more. At $h=2$ the bound $\mathcal E_4$ certifies almost nothing, while $\mathcal E_8$ recovers about $80\%$ of $\mathcal E$. 

The XXZ chain shows the same effect over a whole interval. The gap stays wide across the critical region. In the antiferromagnet, for $\Delta\lesssim-2.5$, the two lowest orders are negative and certify no entanglement, whereas $\mathcal E_8$ still recovers more than half of $\mathcal E$. The last point, $\Delta=1$, is the isotropic ferromagnet, where the ground state is degenerate and the values depend on the state selected. 

The LMG model has no notion of distance, and its ground state lies in the permutation-symmetric subspace of dimension $n+1$. Yet the picture is the same, with the gap widest at the entanglement peak. The performance of the hierarchy is therefore set by the spread of the PT spectrum and not by the range of the interaction. In all three models the step to $p_8$ is the one that pays most.

The size $n=12$ is a limit of exact diagonalization, not of the method. With ground states from the density-matrix renormalization group (DMRG) \cite{White1992,Schollwock2011} we follow the same ten-site window inside chains of up to $n=32$ sites for the TFIM and $n=36$ for the XXZ chain. The negativity of the window decreases as the traced-out environment grows, and the ordering of Eq.~\eqref{eq:monotone} holds at every size (Appendix~D of the End Matter).

The states above are mixed because a part of a pure state is discarded. A thermal state is mixed for a different reason. Figure~\ref{fig:thermal} shows the Gibbs state of the TFIM on $n=10$ sites at $h=0.9$ as a function of the temperature $T$, again with the two end sites traced out. The negativity decreases with $T$ and vanishes at a finite temperature $T\approx1.21$, in units of the coupling. Such a threshold temperature is a known feature of thermal entanglement \cite{ArnesenBoseVedral2001,HartCastelnovo2018} and is often called sudden death \cite{YuEberly2009}. Every $\mathcal E_K$ crosses zero before $\mathcal E$ does, and none is positive above the threshold. Beyond it the raw $\mathcal E_K$ become negative, and Eq.~\eqref{eq:EK_plus} gives the correct value zero.

How fast does the hierarchy converge for such spectra? We have followed $\mathcal E_K$ up to $K\approx60$ for one state of each model (Appendix~B of the End Matter, Figs.~\ref{fig:Kconv_qubit} and \ref{fig:Kconv_manybody}). For the two-qubit states the bound becomes exact at the predicted orders $K_{\rm sat}=8$ and $12$. For the many-body states the gap shrinks with every moment, by more than three orders of magnitude for the LMG state, which has $s=36$ and $K_{\rm sat}=144$, and more slowly for the two chains, for which $K_{\rm sat}$ exceeds $10^3$. Before saturation the remaining gap is controlled by the smallest magnitudes, where the function $h$ is large and steep, so a long tail of small eigenvalues, typical of critical states, slows the convergence down.

\textit{Conclusions.---} Moments of the partial transpose,  signalling entanglement, can be measured on large systems. In the present work we have determined what they are worth quantitatively. Taken as a truncated moment problem, the even PT moments fix a best lower bound on the logarithmic negativity at every order, and the best bound is given by a Gauss or Gauss-Radau quadrature rule. The bounds form a hierarchy with a complete convergence picture. Every further moment improves the bound strictly, the bound is exact for $K_{\rm sat}=4s$,  and it converges to the negativity in general. 

Our results signify what a measurement of PT moments delivers. Data taken to detect entanglement can be turned into a quantifying number with a guarantee, at no further experimental cost, and the saturation criterion tells in advance for which states a few moments are enough. Stabilizer states and states with few spectral levels are quantified exactly. For critical many-body states the low orders certify a sizable part of the entanglement, and the remaining gap is itself a signature of a broad PT spectrum. Since only moments enter, the bounds apply equally to experiments with randomized measurements and to numerical methods such as DMRG (Appendix~D of the End Matter).

Several open avenues for further analysis follow naturally from this work. The first concerns the information that the present bound leaves aside. The odd moments carry the signs of the PT eigenvalues, and the third moment is the cheapest one to measure.  The second issue is statistical. A lower bound at a stated confidence level requires the variances of a concrete protocol and an optimization over the admissible moment sequences (Appendix~C of the End Matter). The third is the speed of convergence before saturation. We find that it is governed by the tail of small eigenvalues, but a general estimate is desirable. Finally, the construction rests only on the alternating signs of the derivatives of one function. It should therefore extend to realignment moments \cite{Rudolph2004,ChenWu2003,TarabungaHaug2025} and to other spectral quantities, for instance the von Neumann entropy in terms of the moments of the density matrix. Semidefinite methods \cite{BertsimasPopescu2005,Lasserre2009} offer a route when further constraints are added. Optimal bounds from a few moments could then become a general tool for quantifying quantum resources in systems that are too large for tomography.

\vspace{10mm}
\begin{center}
\textbf{ END MATTER}
\end{center}

\section*{Appendix A: Outline of the proof of Theorem~\ref{thm:main}}\label{app:A}

The derivatives of $h(x)=x^{-1/2}$ are
\begin{align}
h^{(m)}(x)=(-1)^m\binom{2m}{m}\frac{m!}{4^m}\,x^{-m-1/2}, \label{eq:h_derivatives}
\end{align}
positive for even $m$ and negative for odd $m$ on $(0,1]$. Take first $r=2\ell$ and let $x_1,\dots,x_\ell$ be the Gauss nodes. Let $P$ be the polynomial of degree at most $2\ell-1$ that agrees with $h$ in value and slope at every node. The Hermite remainder formula gives, for some $\xi$ in $(0,1]$,
\begin{align}
h(x)-P(x)=\frac{h^{(2\ell)}(\xi)}{(2\ell)!}\prod_{j=1}^{\ell}(x-x_j)^2\ \ge0. \label{eq:em_gauss}
\end{align}
So $P\le h$, and the integral of $P$ against any admissible measure is fixed by the given moments. This is a lower bound on $S_K^-$. The Gauss rule is admissible and sits on the points where $P=h$, so it attains the bound. For $r=2\ell+1$ one more condition is available, and $P$ is also made to agree with $h$ at a fixed point $c$. The remainder becomes
\begin{align}
h(x)-P(x)=\frac{h^{(2\ell+1)}(\xi)}{(2\ell+1)!}\,(x-c)\prod_{j=1}^{\ell}(x-x_j)^2. \label{eq:em_radau}
\end{align}
Since $h^{(2\ell+1)}<0$, this is non-negative on $[0,1]$ only for $c=1$. The choice $c=0$ would give an upper bound. The fixed node at $x=1$ is thus forced by the sign and not chosen for convenience. The complete argument, including uniqueness and the case $s\le\ell$, is given in  \hyperref[sm:proof]{SM S2}~\cite{SM}.

\section*{Appendix B: Convergence of the hierarchy}\label{app:B}

Theorems~\ref{thm:saturation} and \ref{thm:convergence} state when and how the bound reaches $\mathcal E$. We test both statements with the general quadrature rule of Theorem~\ref{thm:main}, evaluated in arithmetic with $100$ decimal digits (\hyperref[sm:higherK]{SM S7}~\cite{SM}).

Figure~\ref{fig:Kconv_qubit} shows $\mathcal E_K$ for the Werner state at $p=1/2$ and for the pure state of Eq.~\eqref{eq:pure} with $\alpha=1/2$. The quantifier increases at every step and reaches $\mathcal E$ at $K=8$ and $K=12$, respectively. From there on it is constant. These are the orders $K_{\rm sat}=4s$ of Eq.~\eqref{eq:Ksat} for spectra with $s=2$ and $s=3$ magnitudes, so the hierarchy terminates exactly where Theorem~\ref{thm:saturation} says it should.

\begin{figure}[t]
\centering
\includegraphics[width=0.77\linewidth]{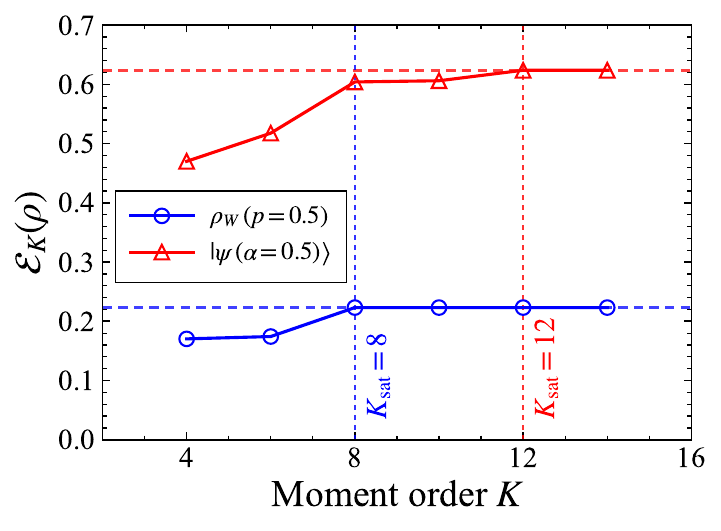}
\caption{Quantifier $\mathcal E_K$ as a function of the order $K$ for the Werner state at $p=1/2$ and the pure two-qubit state with $\alpha=1/2$. The horizontal lines mark $\mathcal E$. The vertical lines mark the orders $K_{\rm sat}=8$ and $K_{\rm sat}=12$ at which the bound becomes exact, in agreement with Theorem~\ref{thm:saturation}.}
\label{fig:Kconv_qubit}
\end{figure}

Figure~\ref{fig:Kconv_manybody} shows the same scan for the three many-body ground states, the TFIM at $h=0.85$, the XXZ chain at $\Delta=-0.5$ and the LMG model at $h=0.73$. Here the bound rises steadily and the gap falls by orders of magnitude, but no saturation occurs in the range shown. This is expected. The LMG state has $s=36$ distinct magnitudes, a consequence of its permutation symmetry, which gives $K_{\rm sat}=144$. Its gap closes fastest and is $10^{-4}$ at $K=60$. The TFIM and XXZ states have several hundred magnitudes each, with $K_{\rm sat}$ above $10^3$. Their gaps close more slowly, to $4\times10^{-3}$ at $K=66$ for the TFIM state and $2\times10^{-2}$ at $K=80$ for the XXZ state. The hierarchy thus converges for every state, quickly when the PT spectrum has few levels and slowly when it has a long tail of small eigenvalues. Orders of this size cannot be measured at present, but they are accessible  numerically, where the moments of a state can be computed to any order.

\begin{figure}[t]
\centering
\includegraphics[width=\linewidth]{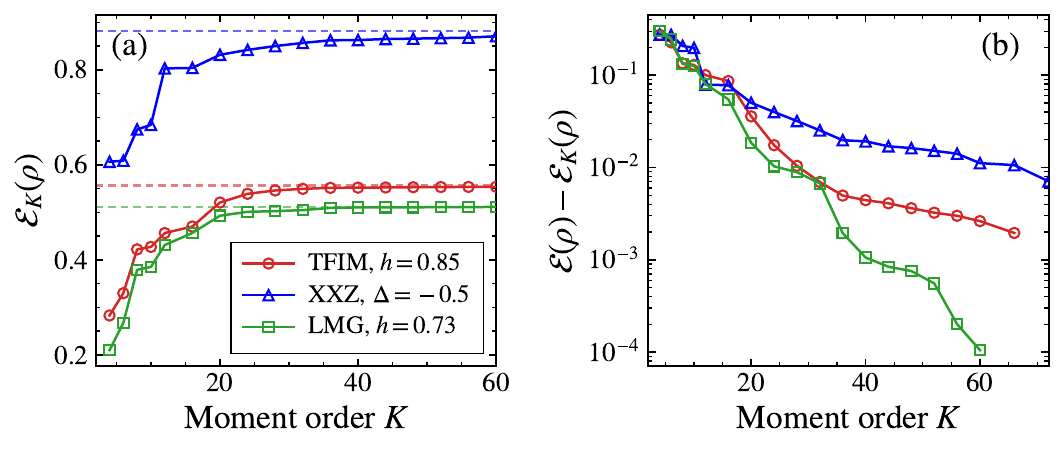}
\caption{Convergence of $\mathcal E_K$ for the ground states of the TFIM at $h=0.85$, the XXZ chain at $\Delta=-0.5$ and the LMG model at $h=0.73$, all for $n=12$. (a) $\mathcal E_K$ as a function of $K$. The dashed lines mark $\mathcal E$. (b) The gap $\mathcal E-\mathcal E_K$ is on a logarithmic scale. The gap decreases monotonically, as required by Proposition~\ref{prop:strict}, and fastest for the LMG state, which has the fewest distinct magnitudes.}
\label{fig:Kconv_manybody}
\end{figure}

\section*{Appendix C: Finite statistics}\label{app:C}

An experiment returns estimates $\hat p_{2k}$ and not the exact moments. Since $\mathcal E_K$ is nonlinear in the moments, the reduced moments $\hat\mu_k=\hat p_{2k}/\hat p_2^{\,k}$ need not belong to any probability measure on $[0,1]$. For $K=6$ the admissible region is $0<\mu_2<1$ and $\mu_2^2<\mu_3<\mu_2$ (\hyperref[sm:small_K]{SM S3}~\cite{SM}). Outside it, Eq.~\eqref{eq:E6} is not defined. This matters for near-term devices, where the number of copies is a limited resource \cite{Preskill2018}. To see how often it occurs we use a simple model. Each of $p_2,p_4,p_6$ receives independent Gaussian noise of relative width $1/\sqrt{N_s}$, with $N_s$ the number of shots. This is a generic scaling and not the variance of a specific protocol \cite{Elben2020,Zhou2020}. We take the Werner state at $p=1/2$ and the TFIM ground state at $h=0.85$ of Fig.~\ref{fig:manybody}, with $8000$ trials for every $N_s$. As estimator we use $\mathcal E_6$ when the data are admissible and fall back to $\mathcal E_4$ otherwise.

Figure~\ref{fig:noise_robustness} shows the outcome. For the TFIM state a fifth of the trials are inadmissible at $N_s=10^2$ and fewer than one percent at $N_s=10^3$. The Werner state is harder, because its spectrum is close to flat and the admissible window for $\mu_3$ is narrow. Almost half of the trials fail at $N_s=10^2$, and the fraction drops below one percent only between $N_s=3\times10^4$ and $10^5$. The root-mean-square error with respect to $\mathcal E$ decreases with $N_s$ and then levels off at the gap $\mathcal E-\mathcal E_6$ of the exact moments, which is $0.049$ for the Werner state and $0.252$ for the TFIM state. Beyond this point more shots do not help and only a larger $K$ does. The estimator is a point estimate. A lower bound at a given confidence level would require minimizing $\mathcal E_K$ over a confidence region of the moments, intersected with the admissible set. We leave this for future work.

\begin{figure}[t]
\centering
\includegraphics[width=\linewidth]{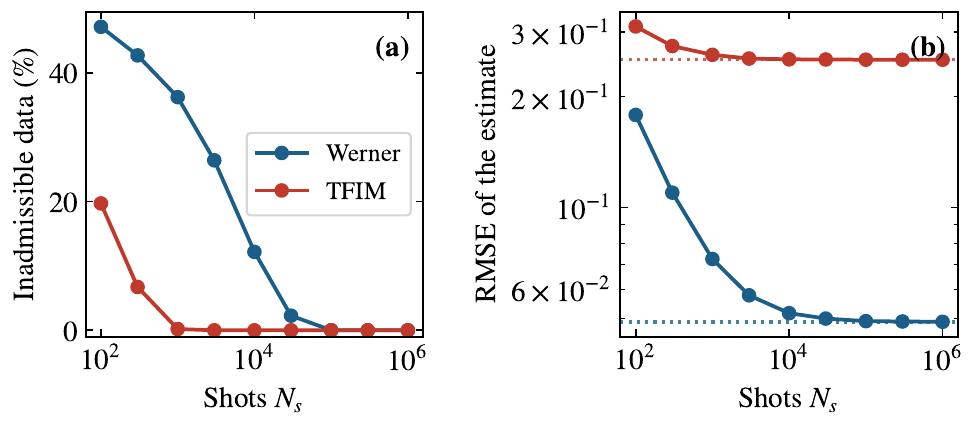}
\caption{Finite statistics for $K=6$, for the Werner state at $p=1/2$ and the TFIM ground state at $h=0.85$ ($n=12$), with independent Gaussian noise of relative width $1/\sqrt{N_s}$ on $p_2$, $p_4$, $p_6$ and $8000$ trials per point. (a) Fraction of trials in which the estimated reduced moments are not admissible. (b) Root-mean-square error of the estimate with respect to $\mathcal E$. The estimate is $\mathcal E_6$ for admissible data and $\mathcal E_4$ otherwise. The dotted lines mark the gap $\mathcal E-\mathcal E_6$ of the exact moments.}
\label{fig:noise_robustness}
\end{figure}

\begin{figure}[h]
\centering
\includegraphics[width=\linewidth]{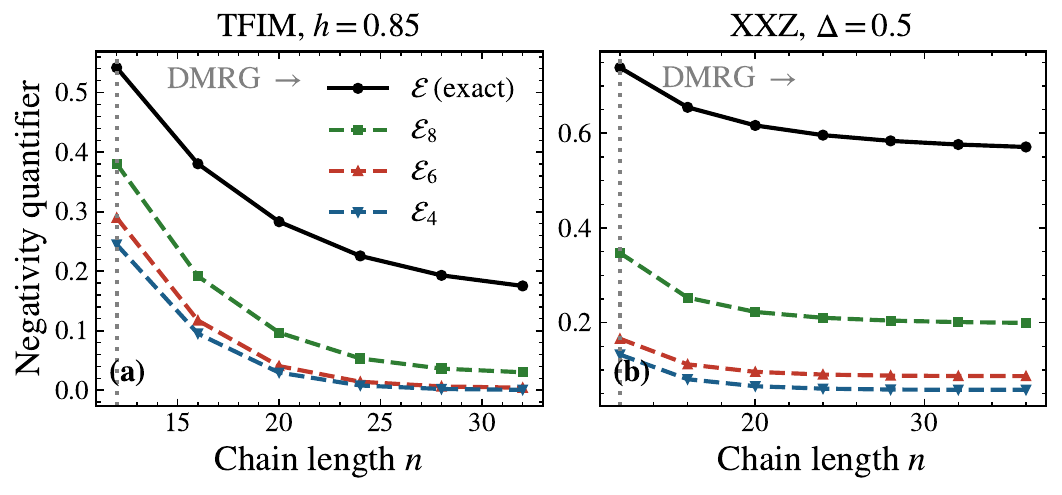}
\caption{Logarithmic negativity $\mathcal E$ and the quantifiers $\mathcal E_4$, $\mathcal E_6$, $\mathcal E_8$ of a window of $n_A=n_B=5$ sites in the middle of the chain for (a) the TFIM at $h=0.85$ and (b) the XXZ chain at $\Delta=0.5$, as a function of the chain length $n$. The remaining $n-10$ sites are traced out. The dotted line marks $n=12$, where DMRG is checked against exact diagonalization. All larger sizes are obtained by DMRG.}
\label{fig:dmrg_scaling}
\end{figure}

\section*{Appendix D: Beyond exact diagonalization}\label{app:D}

For the TFIM and the XXZ chain we obtain the ground state as a matrix product state by two-site DMRG \cite{White1992,Schollwock2011}, implemented in \texttt{quimb} \cite{Gray2018Quimb}. The reduced state of the same window of $n_A=n_B=5$ adjacent sites is extracted from the matrix product state, and the other $n-10$ sites are traced out. The LMG model is not treated in this way, since its all-to-all coupling leaves no locality for a matrix product state to exploit. We validate the calculation in three ways. At $n=12$ a dense diagonalization of the Hamiltonian used for DMRG, written in the spin-operator convention of \texttt{quimb}, reproduces the exact results of Fig.~\ref{fig:manybody}. At every larger $n$ we run DMRG with bond dimensions $200$ and $350$ and accept a point only if the two energies agree to $10^{-6}$ and the two values of $\mu_2$ to $10^{-5}$. 

Finally, the inequalities $0<\mu_2<1$ and $\mu_2^2<\mu_3<\mu_2$ serve as a built-in test, since a bond dimension that is too small violates them before the energy shows any sign of it. With these criteria the TFIM at $h=0.85$, which is close to criticality and therefore the more demanding case, converges up to $n=32$ and fails to converge for $n=36$ to $40$. The XXZ chain at $\Delta=0.5$ converges up to $n=36$. We report no results beyond these sizes. Figure~\ref{fig:dmrg_scaling} shows that the negativity of the window decreases as the environment grows and that the ordering of Eq.~\eqref{eq:monotone} holds at every size.

\bibliography{refs}

\clearpage
\onecolumngrid
\setcounter{equation}{0}
\setcounter{figure}{0}
\renewcommand{\theequation}{S\arabic{equation}}
\renewcommand{\thefigure}{S\arabic{figure}}
\makeatletter
\renewcommand{\theHequation}{S.\arabic{equation}}
\renewcommand{\theHfigure}{S.\arabic{figure}}
\makeatother

\begin{center}
{\large\textbf{Supplemental Material}}
\end{center}

This Supplemental Material contains the proofs and the numerical details behind the main text. Section~\ref{sm:setup} recalls the notation. Section~\ref{sm:proof} proves Theorem~\ref{thm:main} on the optimal quantifier and discusses the scope of the optimality. Section~\ref{sm:small_K} derives the closed forms for $K=4$, $6$ and $8$. Section~\ref{sm:saturation} proves the monotonicity of the hierarchy, Proposition~\ref{prop:strict}, Theorem~\ref{thm:saturation} and Corollary~\ref{cor:stabilizer}. Section~\ref{sm:convergence} proves Theorem~\ref{thm:convergence}. Section~\ref{sm:manybody} describes the many-body models and the numerical setup, and Sec.~\ref{sm:higherK} the numerical evaluation of the hierarchy at high orders. Equations, figures, lemmas and sections of this Supplemental Material carry the prefix S. All other numbers refer to the main text.

\begin{center}\textbf{Contents}\end{center}
\begin{list}{}{\leftmargin=0.12\linewidth\rightmargin=0.12\linewidth\itemsep=0pt}
\item \hyperref[sm:setup]{S1. Setup and notation}\dotfill\pageref{sm:setup}
\item \hyperref[sm:proof]{S2. Proof of Theorem \ref{thm:main}}\dotfill\pageref{sm:proof}
\item \hyperref[sm:small_K]{S3. Closed forms for $K=4$, $6$ and $8$}\dotfill\pageref{sm:small_K}
\item \hyperref[sm:saturation]{S4. Monotonicity, strict improvement and saturation}\dotfill\pageref{sm:saturation}
\item \hyperref[sm:convergence]{S5. Proof of Theorem \ref{thm:convergence}}\dotfill\pageref{sm:convergence}
\item \hyperref[sm:manybody]{S6. Many-body models and numerical setup}\dotfill\pageref{sm:manybody}
\item \hyperref[sm:higherK]{S7. Higher orders: numerical evaluation and convergence}\dotfill\pageref{sm:higherK}
\end{list}

\smsection{Setup and notation}{sm:setup}

We collect the definitions of the main text. For a state $\rho$ on $\mathcal H_A\otimes\mathcal H_B$ with partial transpose $\rho^\Gamma$ and eigenvalues $\{\lambda_i\}$ of $\rho^\Gamma$, the logarithmic negativity and the PT moments are given in Eqs.~(1) and (2) of the main text. The normalized squared eigenvalues, the reduced moments and the measure $\nu$ are
\begin{align}
q_i=\frac{\lambda_i^2}{p_2}, \qquad \mu_k=\sum_iq_i^k=\frac{p_{2k}}{p_2^k}, \qquad \nu=\sum_{i:\,q_i>0}q_i\,\delta_{q_i}. \label{S:defs}
\end{align}
If a magnitude $|\lambda_i|$ occurs with multiplicity $g$, the corresponding point $x=q_i$ carries the weight $g\,x$. The number of points of $\nu$ equals the number $s$ of distinct nonzero magnitudes. With $h(x)=x^{-1/2}$ we have
\begin{align}
\int_0^1x^{k-1}d\nu=\mu_k, \qquad \int_0^1h\,d\nu=S=\sum_iq_i^{1/2}, \qquad \mathcal E(\rho)=\tfrac12\ln p_2+\ln S. \label{S:E_via_S}
\end{align}
For even $K\ge4$ we write $r=K/2$ and $\ell=\lfloor K/4\rfloor=\lfloor r/2\rfloor$. The set of admissible measures and the optimal value are
\begin{align}
\mathcal M_K&=\Big\{\sigma\ \text{probability measure on}\ [0,1]\ :\ \int_0^1x^{k-1}d\sigma=\mu_k,\ 1\le k\le r\Big\}, \label{S:MK}\\
S_K^-&=\inf_{\sigma\in\mathcal M_K}\int_0^1h\,d\sigma, \qquad \mathcal E_K(\rho)=\tfrac12\ln p_2+\ln S_K^-. \label{S:SK}
\end{align}
We set $h(0)=+\infty$, so a measure with weight at the origin has an infinite integral and plays no role in the infimum. Since $\nu\in\mathcal M_K$, we have $S_K^-\le S$ and $\mathcal E_K(\rho)\le\mathcal E(\rho)$.

\smsection{Proof of Theorem~\ref{thm:main}}{sm:proof}

\textit{Polynomial lower bounds.} Let $P(x)=\sum_{k=1}^{r}c_kx^{k-1}$ be a polynomial of degree at most $r-1$ with $P\le h$ on $[0,1]$. For every $\sigma\in\mathcal M_K$,
\begin{align}
\int_0^1h\,d\sigma\ \ge\ \int_0^1P\,d\sigma=\sum_{k=1}^{r}c_k\,\mu_k. \label{S:weak_duality}
\end{align}
The right-hand side depends on the moments only. Every such polynomial therefore gives a lower bound on $S_K^-$. If a measure $\sigma^\star\in\mathcal M_K$ is supported on points where $P=h$, then its integral equals the right-hand side of Eq.~\eqref{S:weak_duality}, and $\sigma^\star$ is a minimizer. The proof consists in finding such a pair.

\textit{Sign of the derivatives.} By induction,
\begin{align}
h^{(m)}(x)=(-1)^m\binom{2m}{m}\frac{m!}{4^m}\,x^{-m-1/2}, \qquad m=0,1,2,\dots \label{S:h_derivatives}
\end{align}
Hence $h^{(m)}>0$ on $(0,1]$ for even $m$ and $h^{(m)}<0$ for odd $m$.

We first treat the case in which $\nu$ has few points. The next lemma shows that the moments then leave no freedom.

\begin{smlemma}[Few magnitudes]
\label{lem:few}
If $2s\le r-1$, then $\mathcal M_K=\{\nu\}$. If $2s=r$, then $\nu$ is the $s$-point Gauss rule of $\mu_1,\dots,\mu_r$.
\end{smlemma}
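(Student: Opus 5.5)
Let $y_1,\dots,y_s\in(0,1]$ be the distinct support points of $\nu$, and set $\omega(x)=\prod_{j=1}^{s}(x-y_j)^2$, a polynomial of degree $2s$. The plan is to exploit the fact that $\omega$ vanishes exactly on the support of $\nu$ and is strictly positive elsewhere on $[0,1]$.

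\emph{Case $2s\le r-1$.} Here $\deg\omega=2s\le r-1$, so $\omega$ is a linear combination of $1,x,\dots,x^{r-1}$. Its integral against any measure is therefore a linear combination of $\mu_1,\dots,\mu_r$. Take any $\sigma\in\mathcal M_K$. Since $\sigma$ and $\nu$ share these moments,
\begin{align}
\int_0^1\omega\,d\sigma=\int_0^1\omega\,d\nu=0 .
\end{align}
Because $\omega\ge0$ on $[0,1]$ and $\omega>0$ away from $\{y_j\}$, the measure $\sigma$ must be supported on $\{y_1,\dots,y_s\}$. Its weights $c_j$ are then constrained by $\sum_j c_j y_j^{k-1}=\mu_k$ for $k=1,\dots,s$, which is possible since $s\le r$. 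This is a Vandermonde system in the distinct nodes $y_j$, so it is invertible. Hence $c_j$ coincide with the weights of $\nu$, and $\sigma=\nu$.

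\emph{Case $2s=r$.} Now $\ell=\lfloor r/2\rfloor=s$, and $\nu$ is an $s$-point positive measure that reproduces $\mu_1,\dots,\mu_r$, that is, all moments of degree up to $2s-1$. This is exactly the defining property of the $s$-point Gauss rule. It remains to check uniqueness. I would argue it directly: suppose $\tau$ is an $s$-point rule with positive weights, nodes $z_j$, and exact on degree $\le 2s-1$. Let $\pi(x)=\prod_j(x-z_j)$. For every polynomial $q$ of degree $<s$, the product $\pi q$ has degree $\le 2s-1$, so its integral against $\nu$ equals its integral against $\tau$, which is $0$. Therefore $\pi$ is the degree-$s$ orthogonal polynomial of $\nu$. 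This polynomial is unique because the Hankel moment matrix of $\nu$ of order $s$ is positive definite, since $\nu$ has $s$ points. Consequently the nodes $z_j$ are the zeros of that polynomial, namely $\{y_j\}$, and the weights then follow from the Vandermonde system as before. So $\nu$ is the Gauss rule.

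The main subtlety is the first case: one must verify that the degree count $2s\le r-1$ is precisely what makes $\omega$ integrable using only the given moments (the constant term corresponds to $\mu_1$). One must also note that positivity of $\omega$ is used strictly off the nodes, including at $x=0$. This holds because every $y_j>0$, so $\omega(0)>0$ and no mass can sit at the origin. Everything else is standard linear algebra.
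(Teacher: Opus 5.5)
Your proposal is correct and follows the paper's proof. For $2s\le r-1$ you use the same squared node polynomial $\prod_j(x-y_j)^2$: its integral is fixed by the given moments, which confines any admissible $\sigma$ to the support of $\nu$, and a Vandermonde system then fixes the weights. For $2s=r$ you identify the node polynomial with the unique monic degree-$s$ orthogonal polynomial of the moment sequence, as the paper does, and you additionally spell out why that polynomial is unique, namely that the Hankel matrix is positive definite.
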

\begin{proof}
Let $y_1,\dots,y_s$ be the points of $\nu$ and $\pi(x)=\prod_j(x-y_j)$. If $2s\le r-1$, the polynomial $\pi^2$ has degree at most $r-1$, so every $\sigma\in\mathcal M_K$ satisfies
\begin{align}
\int_0^1\pi^2\,d\sigma=\int_0^1\pi^2\,d\nu=0.
\end{align}
Thus $\sigma$ is supported on $\{y_j\}$. Its $s$ weights are fixed by the moments $\mu_1,\dots,\mu_s$ through a Vandermonde system, so $\sigma=\nu$. If $2s=r$, then $\int\pi\,x^{k}d\nu=0$ for all $k$, so $\pi$ is the monic polynomial of degree $s$ orthogonal to all lower degrees. Its roots are the Gauss nodes, and the weights are again fixed by the moments.
\end{proof}

\textit{Even $r=2\ell$ with $s>\ell$.} Since $\nu$ has more than $\ell$ points, the form $\langle f,g\rangle=\int fg\,d\nu$ is positive definite on polynomials of degree at most $\ell$. There is a unique monic polynomial $\pi_\ell$ of degree $\ell$ orthogonal to all lower degrees. It is fixed by $\mu_1,\dots,\mu_{2\ell}$, and its roots $x_1,\dots,x_\ell$ are real, distinct and lie strictly between the smallest and the largest point of $\nu$, hence in $(0,1)$ \cite{GautschiBook}. The Gauss rule
\begin{align}
\nu_K^\star=\sum_{j=1}^{\ell}w_j\,\delta_{x_j}, \qquad w_j>0, \label{S:gauss_rule}
\end{align}
integrates every polynomial of degree at most $2\ell-1$ exactly as $\nu$ does, so $\nu_K^\star\in\mathcal M_K$. Let $P$ be the Hermite interpolant of $h$ at the nodes, that is, the polynomial of degree at most $2\ell-1$ with $P(x_j)=h(x_j)$ and $P'(x_j)=h'(x_j)$. For every $x\in(0,1]$ there is a $\xi\in(0,1]$ with
\begin{align}
h(x)-P(x)=\frac{h^{(2\ell)}(\xi)}{(2\ell)!}\prod_{j=1}^{\ell}(x-x_j)^2. \label{S:hermite_gauss}
\end{align}
By Eq.~\eqref{S:h_derivatives} the right-hand side is positive away from the nodes. At $x=0$ the inequality $P<h$ holds trivially. So $P\le h$ on $[0,1]$ with equality only at the nodes, and
\begin{align}
\int_0^1h\,d\nu_K^\star=\sum_jw_jP(x_j)=\int_0^1P\,d\nu_K^\star=\sum_{k=1}^{r}c_k\,\mu_k\ \le\ S_K^-. \label{S:SK_gauss}
\end{align}
Together with $\nu_K^\star\in\mathcal M_K$ this gives $S_K^-=\int h\,d\nu_K^\star$. For uniqueness, let $\sigma\in\mathcal M_K$ be another minimizer. Then $\int(h-P)\,d\sigma=0$, so $\sigma$ is supported on the nodes, and its weights are fixed by the moments. Hence $\sigma=\nu_K^\star$.

\textit{Odd $r=2\ell+1$ with $s>\ell$.} Since $s>\ell\ge1$, the measure $\nu$ is not a single point and has no weight at $x=1$ (a point at $x=1$ means $\lambda_i^2=p_2$, which forces all other eigenvalues to vanish). The measure $(1-x)\,d\nu$ is then positive with $s>\ell$ points. Let $\tilde\pi_\ell$ be its monic orthogonal polynomial of degree $\ell$, fixed by $\mu_1,\dots,\mu_{2\ell+1}$, with roots $x_1,\dots,x_\ell$ in $(0,1)$. The Gauss-Radau rule
\begin{align}
\nu_K^\star=\sum_{j=1}^{\ell}w_j\,\delta_{x_j}+w_c\,\delta_1 \label{S:radau_rule}
\end{align}
integrates polynomials of degree at most $2\ell$ exactly as $\nu$ does \cite{GautschiBook}. Its weights are positive. For $w_j$ this follows from the Gauss rule of $(1-x)d\nu$, and for $w_c$ from applying the rule to $\tilde\pi_\ell^2$,
\begin{align}
w_c\,\tilde\pi_\ell(1)^2=\int_0^1\tilde\pi_\ell^2\,d\nu>0. \label{S:wc}
\end{align}
Let $P$ be the polynomial of degree at most $2\ell$ that agrees with $h$ in value and slope at the $x_j$ and in value at $x=1$. The remainder is
\begin{align}
h(x)-P(x)=\frac{h^{(2\ell+1)}(\xi)}{(2\ell+1)!}\,(x-1)\prod_{j=1}^{\ell}(x-x_j)^2. \label{S:hermite_radau}
\end{align}
Now $h^{(2\ell+1)}<0$ and $x-1\le0$, so the right-hand side is non-negative on $[0,1]$ and vanishes only at the nodes. A fixed node at $x=0$ would replace $x-1$ by $x\ge0$ and turn $P$ into an upper bound. The rest of the argument, including uniqueness, is the same as in the even case.

\textit{The case $s\le\ell$.} If $r$ is odd, or if $r$ is even and $s<\ell$, then $2s\le r-1$ and Lemma~\ref{lem:few} gives $\mathcal M_K=\{\nu\}$. If $r=2\ell$ and $s=\ell$, the Gauss rule exists by the argument of the even case and equals $\nu$ by Lemma~\ref{lem:few}. In both situations the unique minimizer is $\nu_K^\star=\nu$. This completes the proof of Theorem~\ref{thm:main}.

\textit{Scope of the optimality.} The theorem says that no function of $\mu_1,\dots,\mu_r$ that bounds $\int h\,d\sigma$ from below for all $\sigma\in\mathcal M_K$ can exceed $S_K^-$. The set $\mathcal M_K$ is larger than the set of measures that come from a PT spectrum. Three properties of a physical spectrum are not used. First, the weight of each point of $\nu$ is an integer multiple of its position, see below Eq.~\eqref{S:defs}, and the number of points is bounded by the dimension. Second, the odd moments $p_3,p_5,\dots$ depend on the signs of the $\lambda_i$ and do not enter. Third, the normalization $p_1=\sum_i\lambda_i=1$ is not imposed. The last property has a simple consequence,
\begin{align}
\sum_i|\lambda_i|\ \ge\ \sum_i\lambda_i=1 \quad\Longrightarrow\quad \mathcal E(\rho)\ge0, \label{S:nonneg}
\end{align}
which gives the certified value $\mathcal E_K^+=\max\{\mathcal E_K,0\}$ of Eq.~(9). A raw value $\mathcal E_K<0$ occurs when the even moments are also compatible with a measure for which $\sum_i|\lambda_i|<1$, which no physical spectrum allows. The question whether the first two properties can be used to raise the bound above $\mathcal E_K^+$, is open. They are convex or combinatorial side conditions, and the convex ones can be added to a semidefinite formulation of the moment problem \cite{BertsimasPopescu2005,Lasserre2009}.

\smsection{Closed forms for \texorpdfstring{$K=4$, $6$ and $8$}{K=4, 6 and 8}}{sm:small_K}

The closed forms need the reduced moments to satisfy a few inequalities. The following lemma provides them and also describes the admissible region used in Appendix~C of the End Matter.

\begin{smlemma}[Log-convexity]
\label{lem:logconvex}
For the reduced moments of any PT spectrum and every $k\ge2$,
\begin{align}
\mu_k\le\mu_{k-1}\le1, \qquad \mu_k^2\le\mu_{k-1}\,\mu_{k+1}. \label{S:logconvex}
\end{align}
The first inequality is an equality only if $\nu=\delta_1$, and the second only if $\nu$ is a single point.
\end{smlemma}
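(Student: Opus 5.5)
The plan is to write both inequalities as statements about the probability measure $\nu=\sum_{i:q_i>0}q_i\,\delta_{q_i}$ on $(0,1]$, using $\mu_k=\int x^{k-1}d\nu$ from Eq.~\eqref{S:E_via_S}. Every point of $\nu$ lies in $(0,1]$, because $q_i=\lambda_i^2/p_2\le\sum_j\lambda_j^2/p_2=1$.

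\emph{First chain.} The bound $\mu_1=1$ holds by normalization. For $k\ge2$ we have
\begin{align}
\mu_{k-1}-\mu_k=\int_0^1 x^{k-2}(1-x)\,d\nu\ \ge 0 .
\end{align}
The integrand is non-negative on $(0,1]$, so the chain $\mu_k\le\mu_{k-1}\le\dots\le\mu_1=1$ follows by induction. If $\mu_k=\mu_{k-1}$, the integrand must vanish $\nu$-almost everywhere. Since $x^{k-2}>0$ on the support of $\nu$, this forces $\nu$ to be concentrated at $x=1$, which means $\nu=\delta_1$. The converse is immediate.

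\emph{Log-convexity.} Apply the Cauchy--Schwarz inequality in $L^2(\nu)$ to the functions $f=x^{(k-2)/2}$ and $g=x^{k/2}$. Their product is $fg=x^{k-1}$, so
\begin{align}
\mu_k^2=\Big(\int x^{k-1}d\nu\Big)^2\le\int x^{k-2}d\nu\int x^{k}d\nu=\mu_{k-1}\mu_{k+1}.
\end{align}
Equality in Cauchy--Schwarz holds exactly when $f$ and $g$ are proportional $\nu$-almost everywhere. That requires $x^{1}=g/f$ to be constant on the support of $\nu$, so $\nu$ is a single point. Conversely, for $\nu=\delta_y$ we get $\mu_k=y^{k-1}$, and equality holds.

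There is no real obstacle here. The only care needed is in the equality cases. First, we need $f>0$ on the support, which holds because $\nu$ has no mass at $0$ by construction: only the nonzero $q_i$ are included. Second, for the first inequality we should note that equality at a single step $k$ already forces $\nu=\delta_1$. The statement "the first inequality is an equality" should therefore be read stepwise, for either $\mu_k=\mu_{k-1}$ or $\mu_{k-1}=1$. Each of these cases reduces to the computation above with the index shifted.
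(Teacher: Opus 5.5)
Your proof is correct and matches the paper's, rewritten in terms of $\nu$. Your $\int x^{k-2}(1-x)\,d\nu\ge0$ is the paper's summed termwise bound $q_i^k\le q_i^{k-1}$, and your Cauchy--Schwarz with $x^{(k-2)/2}$ and $x^{k/2}$ in $L^2(\nu)$ is exactly its Cauchy--Schwarz for the vectors $q_i^{(k-1)/2}$, $q_i^{(k+1)/2}$ once the weight $q_i$ of $\nu$ is absorbed. One small correction to your closing remark: the branch $\mu_{k-1}=1$ forces $\nu=\delta_1$ only for $k\ge3$, since $\mu_1=1$ always, so the equality clause should be read as $\mu_k=\mu_{k-1}$ (equivalently $\mu_k=1$), which is how the paper uses it.
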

\begin{proof}
Since $0<q_i\le1$ on the support, $q_i^k\le q_i^{k-1}$ with equality only for $q_i=1$. Summation gives the first claim. The second is the Cauchy-Schwarz inequality for the vectors with entries $q_i^{(k-1)/2}$ and $q_i^{(k+1)/2}$,
\begin{align}
\mu_k^2=\Big(\sum_iq_i^{(k-1)/2}q_i^{(k+1)/2}\Big)^2\le\Big(\sum_iq_i^{k-1}\Big)\Big(\sum_iq_i^{k+1}\Big).
\end{align}
Equality requires the two vectors to be proportional, so that all nonzero $q_i$ are equal.
\end{proof}

\textit{$K=4$.} Here $r=2$ and $\ell=1$. A single node with unit weight reproduces $\mu_1=1$, and $\mu_2$ fixes its position,
\begin{align}
\nu_4^\star=\delta_{\mu_2}, \qquad S_4^-=\mu_2^{-1/2}, \qquad \mathcal E_4(\rho)=\frac12\ln\frac{p_2^3}{p_4}. \label{S:E4}
\end{align}
This is the $p_4$ bound of Ref.~\cite{TarabungaHaug2025}.

\textit{$K=6$.} Here $r=3$ and $\ell=1$, so $\nu_6^\star=w_a\delta_a+w_c\delta_1$ with
\begin{align}
w_a+w_c=1, \qquad w_a\,a+w_c=\mu_2, \qquad w_a\,a^2+w_c=\mu_3. \label{S:radau_system}
\end{align}
Subtracting consecutive equations gives $w_a(1-a)=1-\mu_2$ and $w_a\,a(1-a)=\mu_2-\mu_3$, and therefore
\begin{align}
a=\frac{\mu_2-\mu_3}{1-\mu_2}, \qquad w_a=\frac{(1-\mu_2)^2}{1-2\mu_2+\mu_3}, \qquad w_c=\frac{\mu_3-\mu_2^2}{1-2\mu_2+\mu_3}. \label{S:radau_weights}
\end{align}
The denominator and the numerator of $w_c$ are
\begin{align}
1-2\mu_2+\mu_3=\int_0^1(1-x)^2d\nu, \qquad \mu_3-\mu_2^2=\int_0^1(x-\mu_2)^2d\nu. \label{S:denom_identity}
\end{align}
By Lemma~\ref{lem:logconvex} both are positive and $0<a<1$ whenever $\nu$ has more than one point. The result is
\begin{align}
S_6^-=w_a\,a^{-1/2}+w_c=\frac{(1-\mu_2)^{5/2}}{(1-2\mu_2+\mu_3)\sqrt{\mu_2-\mu_3}}+\frac{\mu_3-\mu_2^2}{1-2\mu_2+\mu_3}. \label{S:S6}
\end{align}
If $\nu$ is a single point, then $w_c=0$ and $a=\mu_2$, and Eq.~\eqref{S:S6} reduces to Eq.~\eqref{S:E4}. The admissible region for $K=6$ is thus $0<\mu_2<1$ and $\mu_2^2<\mu_3<\mu_2$.

\textit{$K=8$.} Here $r=4$ and $\ell=2$. The nodes are the roots of $\pi_2(x)=x^2-\alpha x+\beta$, which is orthogonal to $1$ and $x$,
\begin{align}
\mu_3-\alpha\mu_2+\beta=0, \qquad \mu_4-\alpha\mu_3+\beta\mu_2=0. \label{S:orth8}
\end{align}
This gives
\begin{align}
\alpha=\frac{\mu_4-\mu_2\mu_3}{\mu_3-\mu_2^2}, \qquad \beta=\alpha\mu_2-\mu_3, \qquad x_{1,2}=\frac{\alpha\pm\sqrt{\alpha^2-4\beta}}{2}, \label{S:alpha_beta}
\end{align}
and the weights follow from $w_1+w_2=1$ and $w_1x_1+w_2x_2=\mu_2$,
\begin{align}
w_1=\frac{\mu_2-x_2}{x_1-x_2}, \qquad w_2=\frac{x_1-\mu_2}{x_1-x_2}, \qquad S_8^-=w_1\,x_1^{-1/2}+w_2\,x_2^{-1/2}. \label{S:S8}
\end{align}
For $s\ge2$ the nodes lie in $(0,1)$ and the weights are positive (Sec.~\ref{sm:proof}). For $s=1$ one has $\mu_3=\mu_2^2$ and the bound is already exact at $K=4$. Close to this case Eq.~\eqref{S:alpha_beta} is a ratio of small numbers and should be evaluated with care.

\textit{Two-qubit spectra.} For the Werner state of Eq.~(16) the partial transpose has the eigenvalue $(1+p)/4$ with multiplicity three and $(1-3p)/4$ with multiplicity one. For $p>1/3$ this gives $\mathcal E=\ln[(1+3p)/2]$. There are two magnitudes for $0<p<1$, except at $p=1/3$ where one of them is zero. For the pure state of Eq.~(17) the eigenvalues are $\alpha^2$, $1-\alpha^2$ and $\pm\alpha\sqrt{1-\alpha^2}$, with three magnitudes unless $\alpha^2\in\{0,1/2,1\}$.

\smsection{Monotonicity, strict improvement and saturation}{sm:saturation}

We first record the ordering of Eq.~(14), on which the other results rely.

\begin{smprop}[Monotonicity]
\label{prop:monotone}
$S_K^-\le S_{K+2}^-\le S$ for every even $K\ge4$.
\end{smprop}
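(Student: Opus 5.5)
The plan is to compare the feasible sets. For $K\ge4$ even, the set $\mathcal M_{K+2}$ of Eq.~\eqref{S:MK} is contained in $\mathcal M_K$: a probability measure that reproduces $\mu_1,\dots,\mu_{r+1}$ in particular reproduces $\mu_1,\dots,\mu_r$. The infimum of the fixed functional $\sigma\mapsto\int_0^1 h\,d\sigma$ over a smaller set can only be larger or equal, so $S_K^-\le S_{K+2}^-$. For the second inequality we use that $\nu$ belongs to every $\mathcal M_{K'}$, since its moments are by definition $\mu_k$ for all $k$, as recorded in Eq.~\eqref{S:E_via_S}. Therefore $S_{K+2}^-\le\int h\,d\nu=S$.

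Two small points need care. First, both sets must be nonempty and the infima finite, so that the comparison is between real numbers. This follows because $\nu$ lies in both sets and has no mass at the origin, so $\int h\,d\nu=S<\infty$. Measures carrying weight at $x=0$ give $+\infty$ under the convention $h(0)=+\infty$, and they do not affect either infimum. Second, both infima are positive, since $h\ge1$ on $(0,1]$ gives $S_K^-\ge1$. Hence taking logarithms and adding $\tfrac12\ln p_2$ preserves the ordering, and Eq.~\eqref{eq:monotone} follows directly from the proposition.

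There is no real obstacle here. The only step that deserves a sentence is the inclusion $\mathcal M_{K+2}\subseteq\mathcal M_K$, together with the observation that the same $\mu_k$ (computed from the same $\rho$) enter both constraint sets. Theorem~\ref{thm:main} is not needed, although it gives an alternative check: the optimal $\nu_{K+2}^\star$ is admissible for $K$, hence $S_K^-\le\int h\,d\nu_{K+2}^\star=S_{K+2}^-$.
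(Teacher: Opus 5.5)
Your proof is correct and matches the paper's: the inclusion $\mathcal M_{K+2}\subseteq\mathcal M_K$ gives $S_K^-\le S_{K+2}^-$, and $\nu\in\mathcal M_{K+2}$ gives $S_{K+2}^-\le S$. Your extra remarks on finiteness and on $S_K^-\ge1$, which make the passage to logarithms in Eq.~\eqref{eq:monotone} explicit, are correct additions that the paper leaves implicit.
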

\begin{proof}
One more moment is one more linear constraint, so $\mathcal M_{K+2}\subseteq\mathcal M_K$. The infimum over a smaller set is not smaller. The second inequality holds since $\nu\in\mathcal M_{K+2}$.
\end{proof}

\textit{Proof of Theorem~\ref{thm:saturation}.} If $s\le\ell$, Theorem~\ref{thm:main} gives $\nu_K^\star=\nu$ and hence $\mathcal E_K=\mathcal E$. Let $s>\ell$. For even $r$ the minimizer $\nu_K^\star$ has $\ell<s$ points. For odd $r$ it has positive weight at $x=1$ by Eq.~\eqref{S:wc}, while $\nu$ has none. In both cases $\nu\ne\nu_K^\star$. Since $\nu\in\mathcal M_K$ and the minimizer is unique, $S>S_K^-$ and $\mathcal E_K<\mathcal E$. \hfill$\square$

\textit{Proof of Proposition~\ref{prop:strict}.} If $\mathcal E_K=\mathcal E$, Proposition~\ref{prop:monotone} gives $\mathcal E_{K+2}=\mathcal E_K$. Let $\mathcal E_K<\mathcal E$, so that $s>\ell$ by Theorem~\ref{thm:saturation}. We show that $\nu_K^\star$ does not reproduce the next moment $\mu_{r+1}$. For $r=2\ell$ we write $x^{2\ell}=\pi_\ell(x)^2+R(x)$ with $R$ of degree at most $2\ell-1$. Both measures integrate $R$ equally, and $\pi_\ell$ vanishes on the nodes, so
\begin{align}
\int_0^1x^{2\ell}d\nu-\int_0^1x^{2\ell}d\nu_K^\star=\int_0^1\pi_\ell^2\,d\nu>0. \label{S:next_even}
\end{align}
The integral is positive because $\nu$ has more than $\ell$ points. For $r=2\ell+1$ the polynomial $(1-x)\tilde\pi_\ell(x)^2$ has degree $2\ell+1$ and leading coefficient $-1$, and it vanishes on all nodes of the Gauss-Radau rule. In the same way
\begin{align}
\int_0^1x^{2\ell+1}d\nu_K^\star-\int_0^1x^{2\ell+1}d\nu=\int_0^1(1-x)\,\tilde\pi_\ell^2\,d\nu>0. \label{S:next_odd}
\end{align}
Hence $\nu_K^\star\notin\mathcal M_{K+2}$. By Theorem~\ref{thm:main} the infimum over $\mathcal M_{K+2}$ is attained by a measure $\nu_{K+2}^\star$, which belongs to $\mathcal M_K$ and differs from $\nu_K^\star$. Uniqueness of the minimizer over $\mathcal M_K$ gives $S_{K+2}^->S_K^-$. \hfill$\square$

\textit{Proof of Corollary~\ref{cor:stabilizer}.} Let $|\psi\rangle=\sum_ic_i|a_i\rangle|b_i\rangle$ be the Schmidt decomposition of a pure state. The eigenvalues of its partial transpose are $c_i^2$ for every $i$ and $\pm c_ic_j$ for every pair $i<j$ \cite{VidalWerner2002}. For a stabilizer state the reduced state of either part is proportional to a projector of rank $R=2^k$, with $k$ an integer \cite{HeinEisertBriegel2004,Gottesman1997,Gottesman1998Theory,Gottesman1998Heisenberg,AaronsonGottesman2004}. Then $c_i=R^{-1/2}$ and all nonzero eigenvalues are $\pm1/R$, so that
\begin{align}
p_2=1, \qquad p_3=p_4=\frac{1}{R^2}, \qquad \mathcal E_4=\frac12\ln\frac{p_2^3}{p_4}=\ln R=\mathcal E. \label{S:stab}
\end{align}
The spectrum is $\lambda_i=\pm\sqrt{p_3}$, in agreement with Ref.~\cite{TarabungaHaug2025}. There is one magnitude, so Theorem~\ref{thm:saturation} applies with $K=4$, and Proposition~\ref{prop:monotone} gives equality at all higher orders. \hfill$\square$

\smsection{Proof of Theorem~\ref{thm:convergence}}{sm:convergence}

The sequence $S_K^-$ is non-decreasing and bounded by $S$, so it converges. We show that the limit is $S$. The probability measures on $[0,1]$ form a compact set in the weak topology, and each moment is a continuous function of the measure. The sets $\mathcal M_K$ are therefore compact, and they are nested. A measure in their intersection has all moments of $\nu$. Since polynomials are dense in the continuous functions on $[0,1]$, it equals $\nu$,
\begin{align}
\bigcap_K\mathcal M_K=\{\nu\}. \label{S:intersection}
\end{align}
It follows that the minimizers $\nu_K^\star$ converge weakly to $\nu$. Otherwise a subsequence would stay outside a neighborhood of $\nu$, and by compactness it would have a limit point in every $\mathcal M_K$ different from $\nu$, in contradiction with Eq.~\eqref{S:intersection}. The function $h$ is the increasing limit of the bounded continuous functions $h_N=\min(h,N)$, so
\begin{align}
\int_0^1h\,d\sigma=\sup_N\int_0^1h_N\,d\sigma \label{S:lsc}
\end{align}
is lower semicontinuous in $\sigma$. Therefore
\begin{align}
S=\int_0^1h\,d\nu\ \le\ \liminf_{K\to\infty}\int_0^1h\,d\nu_K^\star=\lim_{K\to\infty}S_K^-\ \le\ S, \label{S:limit}
\end{align}
which proves the theorem. \hfill$\square$

\smsection{Many-body models and numerical setup}{sm:manybody}

\textit{Geometry and numerical procedure.} All ground-state results of the main text are for $n=12$ spins with open boundary conditions. The chain is divided into three parts. The blocks $A$ and $B$ have $n_A=n_B=5$ sites each and are adjacent in the middle of the chain. The part $C$ consists of the two remaining sites, one at each end. This is the connected geometry of Ref.~\cite{TarabungaHaug2025}, which we adopt for direct comparison with the two-moment bound. The ground state $|\psi\rangle$ is obtained by sparse Lanczos diagonalization. The iteration is continued until the relative residual of the energy is below $10^{-12}$, and the Lanczos vectors are re-orthogonalized at every step. The reduced state
\begin{align}
\rho_{AB}=\mathrm{tr}_C\,|\psi\rangle\langle\psi| \label{S:rhoAB}
\end{align}
is a matrix of dimension $2^{10}=1024$. We form it by writing $|\psi\rangle$ as a tensor with one index per site and contracting the indices of $C$ with their conjugates, so that the density matrix of the full chain is never built. The partial transpose on $B$ is diagonalized in full, without truncation. From its spectrum we compute $\mathcal E$ and the moments $p_2,\dots,p_8$, and from these $\mathcal E_4$, $\mathcal E_6$, $\mathcal E_8$ with the formulas of Sec.~\ref{sm:small_K}. At every point we check that $\sum_i\lambda_i=1$ to double precision and that $p_2$ agrees with the purity of $\rho_{AB}$, computed without the partial transpose, to twelve digits. No noise and no temperature enter these scans.

The three models were chosen to cover different situations with the same geometry. They are an isolated critical point with a discrete symmetry, a critical phase with a continuous symmetry, and a model without any spatial structure. A feature seen in all three can then be attributed to the PT spectrum and not to a particular Hamiltonian. The negativity of many-body states has been studied widely, in field theory \cite{Calabrese2012Negativity,Calabrese2014Finite,RuggieroAlbaCalabrese2016b}, in spin and harmonic chains \cite{RuggieroAlbaCalabrese2016a,Eisler2014}, and in topologically ordered phases \cite{Castelnovo2013,LeeVidal2013,WenChangRyu2016,LuHsiehGrover2020}.

\textit{Transverse-field Ising model.} The Hamiltonian is
\begin{align}
H_{\mathrm{TFIM}}=-\sum_{k=1}^{n-1}\sigma^x_k\sigma^x_{k+1}-h\sum_{k=1}^{n}\sigma^z_k. \label{S:H_tfim}
\end{align}
The first term favors spins aligned along $x$ and the field $h$ favors alignment along $z$. The model is invariant under the simultaneous flip $\sigma^x_k\to-\sigma^x_k$ of all spins, a $\mathbb Z_2$ symmetry generated by $\prod_k\sigma^z_k$. It is the standard example of a continuous quantum phase transition and can be solved for any size by a Jordan-Wigner mapping to free fermions. For the infinite chain the fermions have the dispersion
\begin{align}
\varepsilon_k=2\sqrt{1+h^2-2h\cos k}, \label{S:tfim_disp}
\end{align}
so the gap is $2|1-h|$ and closes at $h=1$. For $h<1$ the chain is ferromagnetically ordered, with a spontaneous magnetization along $x$ that vanishes as $(1-h^2)^{1/8}$ at the transition. For $h>1$ it is a paramagnet. The correlation length diverges as $h\to1$ from either side. The model is realized in arrays of Rydberg atoms, in trapped-ion chains and in superconducting circuits, which are the platforms where PT moments can be measured. The critical point is described by a conformal field theory with central charge $1/2$, and the entanglement entropy of a block grows logarithmically with its size there \cite{VidalLatorreRicoKitaev2003,CalabreseCardy2004}. For a finite chain the ground state at small $h$ is the symmetric combination of the two ordered states,
\begin{align}
|\psi\rangle\ \to\ \frac{1}{\sqrt2}\big(|{+}{+}\cdots{+}\rangle+|{-}{-}\cdots{-}\rangle\big) \quad (h\to0), \label{S:cat}
\end{align}
with $|\pm\rangle$ the eigenstates of $\sigma^x$. Tracing out $C$ turns this cat state into an equal mixture of two product states, which is separable. For large $h$ the ground state approaches the product state polarized along $z$. Entanglement between $A$ and $B$ is therefore largest in between, close to the critical point, where many Schmidt levels are populated. This is the regime that is hardest for a moment bound, since the PT spectrum then has many magnitudes of comparable size. We scan $h\in[0,2]$ at $26$ equally spaced points.

\textit{XXZ chain.} The Hamiltonian is
\begin{align}
H_{\mathrm{XXZ}}=-\sum_{k=1}^{n-1}\big(\sigma^x_k\sigma^x_{k+1}+\sigma^y_k\sigma^y_{k+1}+\Delta\,\sigma^z_k\sigma^z_{k+1}\big). \label{S:H_xxz}
\end{align}
It conserves the total magnetization $\sum_k\sigma^z_k$, a continuous $U(1)$ symmetry in place of the discrete symmetry of the TFIM. The first two terms exchange neighboring up and down spins, and $\Delta$ sets the strength of the interaction along $z$. A rotation of every second spin by $\pi$ about the $z$ axis changes the sign of the exchange terms and maps Eq.~\eqref{S:H_xxz} to the common antiferromagnetic form with anisotropy $-\Delta$. In our convention the chain is therefore a gapped N\'eel antiferromagnet for $\Delta<-1$ and a ferromagnet for $\Delta>1$. In between, for $-1\le\Delta<1$, it is gapless and described by a Luttinger liquid with central charge one and exponents that vary continuously with $\Delta$. The model is solvable by the Bethe ansatz for all $\Delta$. At $\Delta=-1$ it is the isotropic Heisenberg antiferromagnet, and the gap of the N\'eel phase opens beyond this point exponentially slowly. Deep in the N\'eel phase the ground state of a finite chain approaches the superposition of the two N\'eel states, which is again a cat state. Its entanglement between $A$ and $B$ is small once $C$ is traced out, which explains the decrease of all quantifiers towards $\Delta=-4$ in Fig.~2(b). At $\Delta=1$ the model is the isotropic ferromagnet, whose ground states form a degenerate multiplet of total spin $n/2$. For $\Delta>1$ the ground states are the two fully polarized product states. The approach to this degenerate point is the origin of the kink near $\Delta=1$. The critical behavior thus extends over a whole interval. This allows us to test whether the wide gap between $\mathcal E_4$ and $\mathcal E$ at the Ising critical point is a feature of one point or of gapless chains in general. We scan $\Delta\in[-4,1]$ at $26$ points, the range of Ref.~\cite{TarabungaHaug2025}.

\textit{Lipkin-Meshkov-Glick model.} The Hamiltonian is
\begin{align}
H_{\mathrm{LMG}}=-\frac{1}{n-1}\sum_{i<j}\sigma^x_i\sigma^x_j-h\sum_{k=1}^{n}\sigma^z_k. \label{S:H_lmg}
\end{align}
Every pair of spins interacts with the same strength, and the factor $1/(n-1)$ keeps the energy per spin finite. The model was introduced in nuclear physics \cite{Lipkin1965} and has become a standard example for entanglement in systems with infinite range \cite{LatorreOrusRicoVidal2005,AmicoFazioOsterlohVedral2008}. Such interactions arise for atoms coupled to a common cavity mode and for trapped ions coupled by a shared phonon mode. In terms of the collective spin $J_a=\tfrac12\sum_k\sigma^a_k$ the Hamiltonian reads
\begin{align}
H_{\mathrm{LMG}}=-\frac{2}{n-1}\Big(J_x^2-\frac n4\Big)-2h\,J_z. \label{S:H_lmg_J}
\end{align}
It commutes with the total spin, and the ground state lies in the symmetric sector of dimension $n+1$. In the limit of large $n$ a classical spin pointing at an angle $\theta$ from the $z$ axis has the energy per spin $-\tfrac12\sin^2\theta-h\cos\theta$, which is minimal at $\cos\theta=h$ for $h<1$ and at $\theta=0$ for $h>1$. The transition is at $h=1$. It is of second order with mean-field exponents, and at the transition the entanglement entropy of a block grows logarithmically with the size \cite{LatorreOrusRicoVidal2005}. Like the TFIM, the model is invariant under the flip generated by $\prod_k\sigma^z_k$. At $h=0$ the two states with all spins along $+x$ or $-x$ are degenerate, and a small field selects their symmetric combination, the same cat state as in Eq.~\eqref{S:cat}. For $n=12$ the entanglement of our partition peaks below the transition, at $h\approx0.73$, a finite-size shift that disappears slowly with $n$. Because of the permutation symmetry the PT spectrum has far fewer distinct magnitudes than for the two chains. The model therefore tests whether the behavior of the hierarchy is tied to locality or only to the spectrum. We scan $h\in[0,3]$.

\textit{Results.} The ordering $\mathcal E_4\le\mathcal E_6\le\mathcal E_8\le\mathcal E$ holds at every scanned point (Fig.~2). For the TFIM at $h=0.85$ we find
\begin{align}
\mathcal E=0.542, \qquad \mathcal E_4=0.245, \qquad \mathcal E_6=0.290, \qquad \mathcal E_8=0.380. \label{S:tfim_numbers}
\end{align}
The fraction $(\mathcal E_8-\mathcal E_4)/(\mathcal E-\mathcal E_4)$ of the gap recovered by $\mathcal E_8$ lies between $0.45$ and $0.60$ for $0.7\le h\le1$. For $h\to0$ the ground state approaches the cat state, the reduced state of $AB$ is a separable mixture, and all quantifiers vanish. In the paramagnet the two lowest orders certify little, for instance $\mathcal E_4=0.004$ at $h=2$, where $\mathcal E=0.238$ and $\mathcal E_8=0.195$. In the XXZ chain the gap between $\mathcal E_4$ and $\mathcal E$ is wide over the whole critical region. For $\Delta\lesssim-2.5$ the raw values of $\mathcal E_4$ and $\mathcal E_6$ are negative, while $\mathcal E_8$ stays positive. At $\Delta=-4$ we find $\mathcal E=0.452$ and $\mathcal E_8=0.317$. The last point of the scan, $\Delta=1$, is the isotropic ferromagnet. Its ground state is degenerate, the Lanczos solver returns one state of the multiplet, and the values at this point are not unique. For the LMG model the quantifiers rise from zero at $h=0$, peak near $h\approx0.73$ and decay at large field.

\textit{Thermal state.} For Fig.~3 we take the TFIM at $h=0.9$ on $n=10$ sites and form the Gibbs state $\rho(T)=e^{-H/T}/Z$ by full diagonalization, for temperatures from $T=10^{-3}$ to $T=12$ on a logarithmic grid. Two sites at the ends are traced out, and the partial transpose is taken on one of the two remaining blocks of four sites. At the lowest temperature the ground-state values are recovered. The negativity then decreases and vanishes at $T\approx1.21$, a value located by bisection between two grid points, as expected for thermal entanglement \cite{ArnesenBoseVedral2001,HartCastelnovo2018}. The raw $\mathcal E_4$, $\mathcal E_6$, $\mathcal E_8$ cross zero at lower temperatures and are negative above the threshold, where Eq.~\eqref{S:nonneg} applies.

\smsection{Higher orders: numerical evaluation and convergence}{sm:higherK}

\textit{Evaluation for general $K$.} The nodes of the Gauss rule are the eigenvalues of the Jacobi matrix, the symmetric tridiagonal matrix of size $\ell$ built from the recurrence coefficients of the orthogonal polynomials of the moment sequence. The weights are the squared first components of its normalized eigenvectors. The Gauss-Radau rule is obtained from a Jacobi matrix of size $\ell+1$ whose last diagonal entry is modified so that $x=1$ is an eigenvalue \cite{GautschiBook}. The recurrence coefficients follow from the moments $\mu_1,\dots,\mu_r$ by the Cholesky factorization of their Hankel matrix. This step is ill-conditioned for large $K$, and we carry it out in arithmetic with $100$ decimal digits. For our many-body spectra it remains reliable up to $K\approx60$ to $80$.

\textit{When does the hierarchy converge?} Three statements of the main text answer this question. The bound is exact from the order $K_{\rm sat}=4s$ on, with $s$ the number of distinct nonzero magnitudes (Theorem~\ref{thm:saturation}). Below this order every step is a strict improvement (Proposition~\ref{prop:strict}). The limit $K\to\infty$ gives $\mathcal E$ for any $s$ (Theorem~\ref{thm:convergence}). Since $s$ is at most the dimension of $\mathcal H_A\otimes\mathcal H_B$, the hierarchy of a finite-dimensional state always terminates. The speed of the approach before saturation is not fixed by these results and depends on the spectrum. The gap comes from the smallest magnitudes, where $h$ is large and steep, so a spectrum with a long tail of small eigenvalues converges slowly. The convergence curves for two-qubit and many-body states, computed with the procedure described above, are shown in Figs.~4 and 5 of the main text (Appendix~B of the End Matter).


\end{document}